\documentclass[sigconf, nonacm, screen]{acmart}
\AtBeginDocument{%
  }
\acmConference[HSCC '26]{Hybrid Systems: Computation and Control}{May 11-14, 2026}{Saint Malo, France}
\usepackage{graphicx}
\usepackage{soul}
\usepackage[boxed]{algorithm}
\usepackage[noend]{algpseudocode}
\usepackage{paralist}
\usepackage{epsfig} 
\usepackage{graphicx}
\usepackage[export]{adjustbox}
\usepackage{subcaption}
\usepackage{listings,multicol}
\usepackage{lipsum}
\usepackage{setspace}
\usepackage{comment}
\usepackage{multirow}
\usepackage{tcolorbox}
\usepackage{color}
\usepackage{color, colortbl}
\usepackage{textcomp}
\usepackage{enumitem}
\usepackage{epstopdf}
\usepackage{float}
\usepackage[font={small}]{caption}
\usepackage{wrapfig,lipsum,booktabs}
\usepackage{url}

\newtheorem{definition}{Definition}

\newtheorem{proposition}{Proposition}
\newtheorem{lemma}{Lemma}

\newtheorem{problem}{Problem}
\newcommand{\cl}[1]{\mathcal{#1}}
\newcommand{\bb}[1]{\mathbb{#1}}
\newcommand{\mb}[1]{\mathbf{#1}}
\renewcommand{\qed}{\tiny {\hspace*{\fill}$\Box$}} 

\newcommand{\norm}[2][2]{{\Vert{#2}\Vert}_{#1}}

\newcommand{\bigat}[1]{\Big\vert_{#1}}

\newcommand{\floor}[1]{\lfloor{#1}\rfloor}
\newcommand{\abs}[1]{\lvert{#1}\rvert}
\newcommand{\tupleangle}[1]{\langle{#1}\rangle}
\newcommand{\Piinv}[1][i]{P_{#1}^{-1}}
\newcommand{\Qiinv}[1][i]{Q_{#1}^{-1}}

\newcommand{\Kis}[1][i]{K_{#1}^S}
\newcommand{\alphaides}[1][i]{\alpha_{ ref,#1}}

\newcommand{\R}[1]{\mathcal{R}_{#1}}

\newcommand{\Runsf}{\R{S}\setminus \R{P}}
\newcommand{\xj}[1][j]{\mb{x}_{#1}}

\newcommand{\x}{\mb{x}}
\newcommand{\mbx}[1][x]{\mb{#1}}

\begin{document}
%
\title{Safe Controller Synthesis Using Lyapunov-based Barriers for Linear Hybrid Systems with Simplex Architecture}
%
%
\author{Sunandan Adhikary, Soumyajit Dey}
\affiliation{
\department{Department of Computer Science \& Technology}
\institution{Indian Institute of Technology Kharagpur, India}\country{}}
\email{{mesunandan.kgpian, soumya.cse}@iitkgp.ac.in}

\begin{abstract}
Modern Cyber-Physical Systems often have a two-layered design, where the primary controller is AI-enabled or an analytical controller optimising some specific cost function. If the resulting control action is perceived as unsafe, a secondary safety-focused backup controller is activated. The existing backup controller design schemes do not consider a real-time deadline for the course correction of a potentially unsafe system trajectory, or constrain maximisation of the safe operating region as a synthesis criterion. This essentially implies an {\em eventual safety} guarantee over a {\em small operating region}. 
\par\noindent This paper proposes a novel design method for backup safe controllers (BSCs) that ensure invariance across the largest possible region in the safe state space, along with a guarantee for timely recovery when the system states deviate from their usual behaviour. This is the first work to synthesise safe controllers that ensure maximal safety and timely recovery while aiming at minimal resource usage by switching between BSCs with different execution rates. 
An online safe controller activation policy is also proposed to switch between BSCs (and the primary optimal controller) to optimise processing bandwidth for control computation. To establish the efficacy of the proposed method, we evaluate the safety and recovery time of the proposed safe controllers, as well as the activation policy, in closed loops with linear hybrid dynamical systems under budgeted bandwidth.
\end{abstract}
\maketitle 
\section{Introduction}
\label{secIntro}
Modern autonomous hybrid systems often incorporate learning-enabled components to guide the performance-optimising controllers. Such optimal controllers are typically designed for the underapproximated nominal model of a complex hybrid system and struggle to handle the modelling errors generated from the complex interaction of individual system components, unaccounted wear and tear, etc. Therefore, relying solely on formally verified cost-optimising controllers can reduce robustness against such misapproximated modelling errors, resulting in unsafe and undesired behaviours in their real-world implementation. In this regard, learning-enabled components are often used to enable them to tolerate a certain degree of modelling errors without compromising the desired optimality. However, the inherent opacity of the integrated AI component again poses a challenge to the system safety and its thorough formal verification~\cite{seshia2022toward}. 
Therefore, in both cases, formally verifying safety before real-world implementation is paramount but challenging. 
%
\par
This motivates several research aspects, such as developing model-checking or formal-verification-based methodologies for learning-enabled control components; learning safe control strategies, etc.~\cite {seshia2022toward,hobbs2023runtime}. As nicely summarised in~\cite{seshia2022toward}, developing safe-by-construction or safety-verified AI-based control components for complex system models is hard to achieve. This is why autonomous systems often adopt a {\em simplex architecture}~\cite{sha2001using,bak2014real} (see Fig.~\ref{fig:simplex}). It employs a {\em safety monitor} for monitoring system safety in runtime (bottommost rectangle on the right half of Fig.~\ref{fig:simplex}, labelled with grey text) along with the following two different kinds of controllers. {\em A primary controller}, which may be a learning-enabled controller designed to optimally achieve certain control goals for a given nominal hybrid dynamics, or an analytically designed controller that optimises a desired cost function given a known hybrid dynamics (rectangle labelled with blue text); {\em a backup controller}, which prioritises safety (rectangle labelled in green text). Naturally, the primary optimal controllers (POCs) are largely focused on control performance. Their control actions are passed through a safety monitor, 
which predicts the future system state using a nominal system model. If the prediction is found unsafe, the system switches to the backup safe controller (BSC). This work focuses on a novel BSC design that ensures fast restoration of system safety using limited resources.
\begin{figure}[!b]
\vspace{-8mm}
    \centering
    \includegraphics[width=0.75\columnwidth,clip]{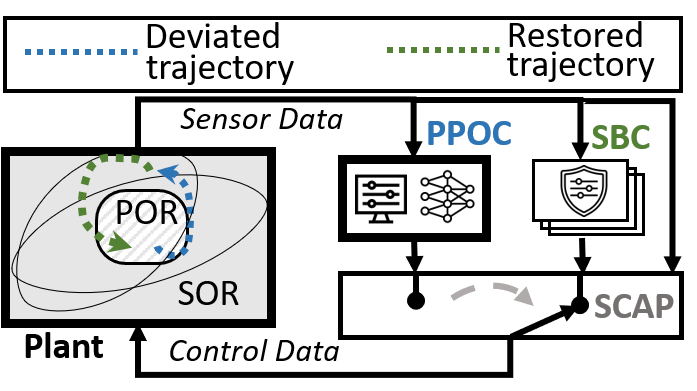}
    \vspace{-0.4cm}
    \caption{Simplex Setup with POC and BSC}
    \vspace{-0.2cm}
    \label{fig:simplex}
\end{figure}
%
%
\par\noindent
Most safe controller synthesis methodologies build upon abstraction-based symbolic assumptions of non-linear closed-loop system dynamics. Consequently, they suffer from the state-space explosion and high approximation error for high-dimensional systems with discrete time constraints. To avoid this, another line of work focuses on \emph{reach-avoid} controller synthesis problems. This involves planning a reference path for a system that avoids certain obstacles and reaches a specific goal region starting from an initial region in its state space~\cite{schurmann2017convex,fan2020fast}. Authors of~\cite{fan2020fast} utilise satisfiability theory to solve this problem. Their methodology requires closed-loop system dynamics along with a primary controller and its corresponding Lyapunov function, which, however, is not easy to find for learning-enabled feedback-controlled systems. 
Also, for the online optimisation problem formulated in~\cite{chen2021fastrack}, its robustness against tracking error may result in infeasibility in runtime.
\par
The aforesaid controller synthesis problems use {\em Lyapunov function (LF)-based constraints} for enforcing Lyapunov stability with desired performance margin and reachability, or constrained optimisation-based techniques to achieve safety. There exist controller synthesis techniques specifically designed to ensure {\em safety by construction} using {\em barrier functions (BF)} to constrain the control operations within a desired safe state-space~\cite{prajna2004safety}.
Combining Lyapunov and BF-based
constraints for computing control actions is therefore naturally explored in works like~\cite{choi2021robust,garg2021robust}
to design safe and stable controllers using quadratic programming (QP).
However, finding such solutions in real-time is computationally challenging. Moreover, finding a suitable BF, which is a prerequisite for such solutions, is an NP-hard problem that requires solving non-convex bilinear matrix inequality constraints~\cite{dai2024verification}.
Works like~\cite{dai2024verification} formulate this barrier synthesis problem as a sum-of-squares (SOS) problem to apply convex SOS-specific relaxations. The authors in~\cite{dai2024verification} also ensure, while synthesising BF, that it is compatible with a candidate LF. This ensures the feasibility of LF-BF-based control synthesis problems in producing a common control solution that adheres to both desired Lyapunov stability and barrier-based safety constraints.
However, unless such controllers enforce {\em timely convergence} as a constraint while synthesising an LF-BF-based control solution, they are not suitable for application in a simplex setup. 
%
In~\cite{garg2021robust}, this is achieved by a QP-based control computation that theoretically ensures a convergence time from its candidate LF. The prerequisite for this solution, though, is a candidate BF. To overcome this issue, authors in~\cite{thomas2018safety} have introduced an iterative controller synthesis technique that uses quadratic LF-based BFs and formulates linear matrix inequalities (LMI) that maximise the LF's $one$-level set (i.e., the set of states in state space for which the quadratic LF evaluates to $1$) with an allowable input range constraint. Such an iterative synthesis algorithm avoids the complexity of SOS or QP-based synthesis by switching between multiple controller-barrier pairs. But this switching might cause instability. To avoid such iterative approaches, authors in~\cite{wang2024convex} propose a single semidefinite program (SDP) to co-design BF and controllers. However, neither of them essentially guarantees safety for the largest possible state space nor promises any desired convergence rate. We intend to propose a controller synthesis solution that bridges these gaps by theoretically ensuring desired convergence from the largest forward-invariant region inside the safe domain (i.e., state trajectories initialised inside this region will always remain confined within it) in a resource-constrained setting.
%
\par\noindent
Usually, in embedded or networked cyber-physical systems (CPSs), tasks are periodically scheduled with fixed priorities, which helps designers to analyse their worst-case response times (WCRTs) and accordingly design the feedback control loops~\cite{bini2008delay}
to adhere to the real-time constraints.
However, new standards for adaptive scheduling and event- or data-driven control support have emerged in recent years to enable efficient utilisation of the available processing and communication bandwidth in resource-constrained compute platforms.
This enables the prioritisation of critical software functions when required, bypassing the interference of predefined scheduling decisions~\cite{adhikary2024revisiting}. 
These are useful in executing bandwidth-intensive controllers using learnable components or optimising complex cost functions in resource-constrained platforms, such as modern CPS architectures, that encourage centralised processing platforms shared among a set of software controllers, rather than federated architectures~\cite{kramer2015real} from earlier times. 
In such platforms, bandwidth efficiency is often managed by adaptively scheduling periodic control tasks by adjusting the periodicity to meet modified run-time performance requirements~\cite{bini2008delay,schinkel2002optimal,adhikary2024revisiting}.
Unlike the event-triggered control invocations, such multi-rate control executions 
still comply with the WCRT-based schedulability analysis to guarantee timeliness.
\par In~\cite{bini2008delay}, an optimal online period assignment method is developed that thoroughly analyses the controller response times and assigns task periodicities within a utilisation budget under a fixed priority scheduler, based on these response time estimates. In a similar vein, the authors in~\cite{sudvarg2025integrated} synthesise safe controllers with suitable periodicities to cope with the computation time incurred by the SOS-based BF computation. This is useful for ensuring timeliness but may lead to unstable switching sequences. To avoid this, we incorporate the idea of LF-guided sampling period switching as used in~\cite{schinkel2002optimal,adhikary2024revisiting} for safe controllers.
\begin{figure}[!b]
    \centering
    \includegraphics[width=0.9\columnwidth,clip]{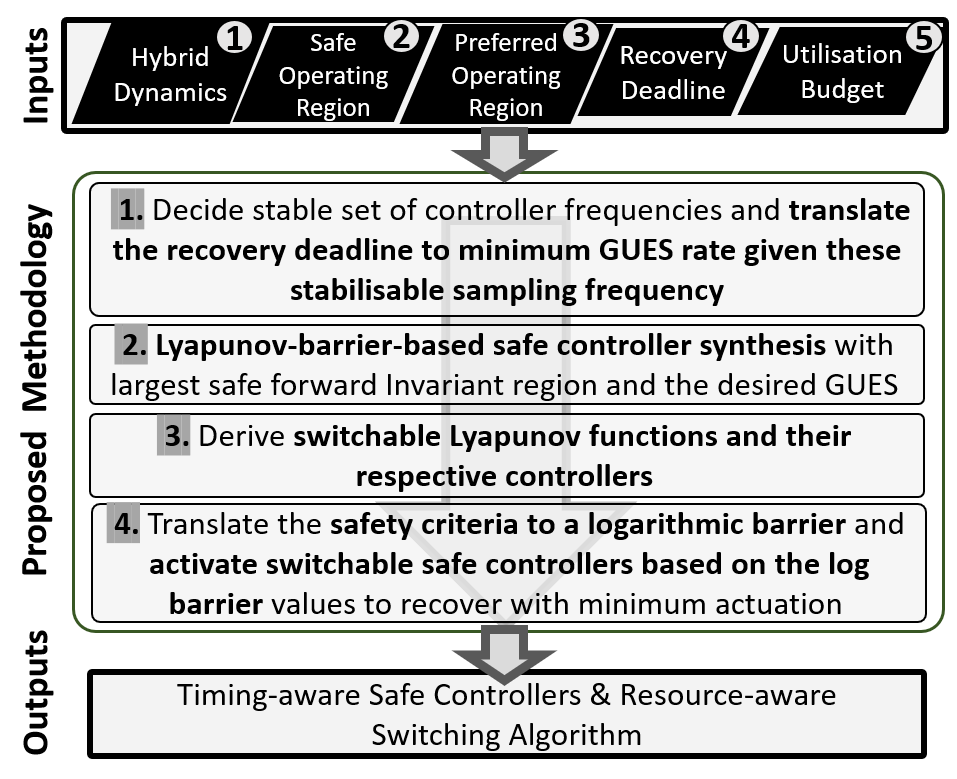}
    \vspace{-0.4cm}
    \caption{Overview of the proposed framework}
    \label{fig:overview}
\end{figure}
\par\noindent 
We consider a hybrid control loop implemented as a simplex architecture in an embedded or networked CPS setting. If the system is perceived to perform unsafely under the operation of a primary optimal controller (POC) as shown with a dashed blue line in Fig.~\ref{fig:simplex}, a backup safe controller (BSC) must be activated for timely recovery to safety (the green dashed line in Fig.~\ref{fig:simplex}). 
We propose a novel framework for synthesis and activation of such safe controllers that enforce {\em (i) safety} and {\em (ii) timely convergence} of the deviated trajectories {\em (iii) with minimal 
communication and computation overheads}. Fig.~\ref{fig:overview} presents an overview of the synthesis framework. It takes as input a hybrid dynamics, linearised around an equilibrium point, along with its safe and preferred operating regions (SOR and POR, respectively), the required recovery deadline, and a processor utilisation upper bound to achieve the following.
 \begin{compactenum}[1.]
\item 
The framework implements a novel {\em safe controller synthesis methodology} (steps 1, 2 in Fig.~\ref{fig:overview}) by formulating a set of linear matrix inequality (LMI) constraints to impose $(i)$ a desired recovery/convergence rate (derived from input 4) of Lyapunov functions (LFs) and $(ii)$ the largest forward-invariant region inside the safe operating region (SOR, input 2) utilising barrier functions (BFs). 
%
\item The methodology is designed to solve the above synthesis problem for selected sampling periodicities while ensuring safe switching among controllers designed with these different periodicities (step 3). This is done while respecting the desired recovery and safety criteria using multiple Lyapunov function (MLF)-based analysis. This helps obtain multiple safe control solutions to recover from a larger subregion within SOR and enables adaptive scheduling decisions for adhering to resource constraints.
%
\item 
Given the set of switchable safe controllers and a processor utilisation budget (input 5), the framework devises an algorithmic framework for {\em state-dependent controller activation policy} (SCAP, in step 4) that intelligently activates these safe controllers based on their forward-invariant regions while adaptively changing their execution rates to adhere to the utilisation budget.
%
\item We finally evaluate our framework for safe controller synthesis and adaptive switching on multiple hybrid control benchmarks.
\end{compactenum}
To the best of our knowledge, this is the first work to map the problem of safe convergence for the timely recovery of hybrid systems in a CPS platform to a resource-sensitive controller switching and adaptive scheduling approach. In the sections that follow, we briefly present the necessary background theory and the proposed methodology built using them.
\section{Prelimineries}
\label{sec:bg}
We can linearise a hybrid control system model $\dot{x}= \mb{f}(x, u)$ near an equilibrium point $x_{ref}$ with linear time-invariant (LTI) dynamics as follows.
\begin{align}
\label{eq:lti_sys}
    &\dot{x}(t) =\Phi x(t) + \Gamma u(t)+w(t),\  y(t) = C x(t)+v(t)\\\nonumber 
    &\hat{x}[k + 1] = A \hat{x} [k] + B u[k] + L(y[k]-C\hat{x}[k]),\ u[k] =-K \hat{x}[k]
\end{align}
Here, the vectors $x \in \mathbb{R}^n$, $\hat{x} \in \mathbb{R}^n$, $y\in \mathbb{R}^m$, and $u\in \mathbb{R}^p$ define the plant state, estimated plant state, output, and control input, respectively. $\hat{x}[k],y[k],u[k]$ are their values at the $k$-th 
($k\in\mathbb{N}$) sampling instance. At the $k$-th sampling instance, the real-time is $t=kh$, where $h$ is the sampling period. 
$w(t)\sim \mathcal{N}(0,Q_w),\ v(t)\sim \mathcal{N}(0,R_v)$ are process and measurement noises. They follow Gaussian white noise distributions with variances $Q_w\in\mathbb{R}^{n\times n}$ and $R_v\in\mathbb{R}^{m\times m}$, respectively. 
The matrices $\Phi= \frac{\partial \mb{f}}{\partial \x}\bigat{\substack{\scriptscriptstyle x=x_{ref}\\\scriptscriptstyle\ u=u_{ref}}}, \Gamma = \frac{\partial \mb{f}}{\partial u}\bigat{\substack{\scriptscriptstyle x=x_{ref}\\\scriptscriptstyle\ u=u_{ref}}}$ are the continuous-time state and input-to-state transition matrices, respectively. The discrete-time counterparts of these matrices are, $A = e^{\Phi h}$, $B = \int\limits^{h}_{0}e^{\Phi t}\Gamma dt$. $C$ is the output transition matrix. The feedback controller module samples the plant output $y[k]$ periodically, once in every $h$ duration, and estimates the plant state $\hat x[k]$ using a Luenberger observer with Kalman gain $L$ to filter the Gaussian noises. The control input $u[k]$ is calculated from the estimated state using $K$ feedback gain. The control input $u[k]$ is communicated via the communication medium that connects the plant and control unit and is applied to the plant through the actuators. 
\par 
We consider that an analytically or learning-enabled controller already exists, that is
designed to optimise specific performance requirements for the given system while operating in the bounded vicinity of an equilibrium point. We express this controller as POC as mentioned earlier, and denote the controller gain as $K$. The bounded region around the equilibrium point, where the POC operates, is termed the hybrid control system's preferred operating region (POR) $\R{P}$ (see Fig.~\ref{fig:simplex}). Since designing a global controller for a nonlinear dynamics is computationally heavy, we consider the controller to be designed for the system dynamics linearised around a certain equilibrium point in the state space.
In Sec.~\ref{sec:safectrl} we explain how a safe controller gain is designed that keeps the system trajectories within the safely operable limits of the system states, which we denote as the safely operable region (SOR) $\R{S}\supset \R{P}$ (see Fig.~\ref{fig:simplex}) for such a linearised hybrid control system. 
\subsection{Convergence in Lyapunov Sense}
\label{subsecLFstab}
A feedback control law is designed with specific stability and safety criteria in mind for the discretised linear difference equation, considering $(A, B)$ is controllable when in $\R{P}$. To quantify how closely the system fulfils the performance criteria while ensuring stability, a control design metric is used that also represents the control objective during controller design. One such design metric that we often use is \textit{settling time}, i.e., the time it takes for the system to maintain a steady output within a fixed error margin around the desired reference value (e.g., within $2\%$ the error band). Hence, the controller has to be designed in such a way that the given settling time requirement is always met. In this work, we consider exponential stability as a performance-driven metric for stability. Given a settling time requirement, we can express it in terms of a desired minimum exponential decay within the mentioned settling time duration so that the system output stays within a $2\%$ error bound of a desired reference. Theoretically, we express it using the notion of the Global Uniform Exponential Stability (GUES) criterion as defined below.
\begin{definition}[{\bf Globally Uniformly Exponentially Stable}]
\label{def:gues}
The equilibrium $x_{ref} = 0$ of the system in Eq.~\ref{eq:lti_sys} is globally uniformly exponentially stable (\textbf{GUES}) if, for 
initial conditions $x[k_0]$, there exist constants $M > 0, \ 0 < \delta < 1$ such that the solution of the system satisfies $\norm{x[k]} \leq M \delta^{(k-k_0)}~\norm{x[k_0]}\leq M e^{\gamma \times k}~\norm{x[k_0]} (\gamma < 0), \ \forall k \geq k_0$ where $\norm{.}$ is the vector norm. \hfill$\Box$
\end{definition}
\par In the {\em Lyapunov sense}, by proving the existence of a continuously differentiable (w.r.t. $k$, $x$) \emph{Lyapunov function (LF)} $V:\bb{R}^n\mapsto \bb{R}$ with the following properties as given below, one can prove the desired stabilisability of the system. 
\begin{align}
\label{eqLF}
    &\text{(\emph{i} ) } V(x_{ref}) = 0,\quad 
    \text{(\emph{ii}) } x\in \R{P}\setminus \{x_{ref}\} \iff V(x)>0,
    \\\nonumber
    &\text{ (\emph{iii}) } x\in \R{P}\setminus \{x_{ref}\} \iff \Delta V= V(\mb{x}[k+1])-V(\mb{x}[k])\leq 0
\end{align}
 Properties (\emph{i}) and (\emph{ii}) ensure that the LF is zero-valued at the reference point $x_{ref}$ and radially diverges in the state space, i.e., $\Vert x\Vert\rightarrow \infty \iff V(x)\rightarrow \infty$. Property (\emph{iii}) 
 ensures the convergence of the LF in the operating state space ${\R{P}}$ along the system trajectory, i.e., $t\rightarrow \infty \iff V(\mb{x}(t))\rightarrow 0$~\cite{astrom97}. This, in turn, confirms {\em local asymptotic stability} (LAS) of the feedback-controlled system when $\mb{x}\in{\R{P}} \setminus \{x_{ref}\}$, i.e., $t\rightarrow \infty \iff V (\mb{x})\rightarrow 0$~\cite{astrom97}. An optimal controller design involves designing a controller with minimum cost, where cost is a function of the states and the inputs. From the perspective of Lyapunov analysis, this implies the existence of an LF $V$ given a desired decay matrix $Q$, i.e.$\Delta V \leq -Q$ where $Q$ is derived from the cost matrices.
 The kinetic energy function for most real-world mechanical and electrical hybrid systems is quadratic and exhibits properties similar to those of the Lyapunov functions as defined in Eq.~\ref{eqLF}~\cite{astrom97}. This makes the kinetic energy functions proper quadratic Lyapunov function (QLF) candidates for these systems, e.g., $V =(\mb{x}- x_{ref})^T P  (\mb{x}-x_{ref})$ with a symmetric positive definite (SPD) matrix $P$ (i.e., $P=P^T>0$). By shifting the origin to $x_{ref}$, we represent the QLF as $V_i=x^TP_ix$, where $x=\mb{x}-x_{ref}$. 
%
%
\par In this work, as performance criteria, we consider the exponential rate of convergence towards a desired reference, in other words, the {\em local exponential stability (LES)}. The state evolution of a system having LES with a desired {\em GUES decay rate} of $\gamma\ (\in(0,\infty))$ can be expressed as defined in Def.~\ref{def:gues}
From the existing literature~\cite{damak2015exponential,adhikary2024revisiting}, in the Lyapunov sense, the LES is defined as below.
\begin{definition}
    \label{defLES}
    The discrete-time LTI system given in Eq.~\ref{eq:lti_sys} is \emph{ exponentially stable} in the Lyapunov sense with an average exponential decay rate $\gamma$ around an equilibrium/reference state $x_{ref}\in\R{S}$ if there exists a QLF candidate $V =(\mb{x}-x_{ref})^T P (\mb{x}-x_{ref})$ (or by shifting the origin to $x_{ref}$, $V_i=x^TP_ix$) that satisfies properties $(i)$ and $(ii)$ from Eq.~\ref{eqLF} along with the following:
    \begin{align}
    \label{eq:LFLES}
        \Delta V  = V (\mb{x}[k+1])-V (\mb{x}[k]) \le - \alpha V_i(\mb{x}[k])\\\nonumber
        \Rightarrow
        (A-BK)^T P  (A-BK) \le (1-\alpha)P 
    \end{align}
    where the decay rate of the QLF $\alpha\in (0,1)$ is computed as $\alpha = 1- e^{-2 \gamma}$.
\end{definition}
\noindent To ensure timely recovery, we compute a desired GUES decay rate $\gamma$ and the desired QLF decay $\alpha\in (1- e^{-2 \gamma},1)$ corresponding to it to formulate the inequality in Eq.~\ref{eq:LFLES}. 
In {\em steps 1, 2} of our methodology (see Fig.~\ref{fig:overview}), we discuss how the recovery deadline is imposed using this QLF decay rate and inequality.
\subsection{Safety Invariance using Barrier Function}
\label{subsecsafety}
\par As for the notion of \emph{safety}, we focus on the {\em forward invariance} criteria among the most prevalent safety conditions considered during controller design. {\em Forward invariance} demands that the states of a hybrid system, when initialised from a {\em compact and convex} space, always remain confined within it in the future. Mathematically, if $\forall k,k^\prime\in\bb{Z}^{+}$ such that, $k^\prime > k,\ \mb{x}[k]\in {\R{S}}\Rightarrow \mb{x}[k^\prime]\in {\R{S}}$ then ${\R{S}}$ is a {\em forward-invariant} region for the autonomous hybrid system. Naturally, it is computationally challenging to ensure forward invariance for a hybrid system across its entire domain when designing a controller.
\par In this work, we use a safety barrier function (SBF) $B:\bb{R}^n\mapsto \bb{R}$ as an indicator of safe and unsafe regions as defined in~\cite{prajna2004safety}. SBF has the following properties: 
\begin{align}
\label{eq:BF}\nonumber
    &\text{(\emph{i} ) } B(x) \le 0, \forall x\in \R{S}\quad 
    \text{(\emph{ii}) }  B(x) = 0, \forall x\in \partial\R{S}
    \\
    &\text{ (\emph{iii}) } \dot{B}(x) \le 0 \forall x\in \R{S}\quad \text{(\emph{iv}) } B(x) > 0, \forall x\notin \R{S}
\end{align}
Here $\partial$ denotes the set of boundary points of the region that follows. Properties (\emph{i}) and (\emph{ii}) ensure that a non-positive value of $B$ denotes the safe region, and property (\emph{iv}) ensures that a negative value of $B$ denotes the unsafe region. Moreover, property (\emph{iv}) ensures the forward invariance of a hybrid system's state trajectory initiated within the zero-level set of $B$. Ensuring the decay of the SBF $B$ to be $\alpha_B\in (0,1)$, i.e., $\dot{B}(x) \le -\alpha_B B$. In steps 2 and 4 of our methodology (see Fig.~\ref{fig:overview}), we use these constraints to synthesise a safe controller and a safe controller switching policy.
\subsection{Hybrid Systems Stably Operating with Multiple Sampling Periods}
\label{sec:multirate_ctrl}
Notice that the discrete-time system characteristic matrices for a system depend on sampling period ($h$), i.e.,  
$\scriptstyle A_h = e^{\Phi h},\ B_h = \int\limits^{h}_{0}e^{\Phi t}\Gamma dt$.
Accordingly, the LQR and Kalman gains, i.e. $K_h$ and $L_h$, respectively,  also change. 
%
%
For multiple sampling periods, $h_i, h_j$, the closed loops can be considered as separate subsystems denoted as $\cl{K}_{h_i}$ and $\cl{K}_{h_j}$ respectively, that use the control gains $K_i$ and $K_j$, respectively. Note that when a stable system changes its sampling period with a new choice of stabilising LQR and Kalman gains designed for the new sampling period, unstable transient behaviour may  exist~\cite{schinkel2002optimal}. For this, we need to constrain the switching to certain sampling period choices such that the overall switched system satisfies the desired decay rate that we discuss in {\em step 3} of our methodology~\cite{liberzon} (as shown in Fig.~\ref{fig:overview}).
%
In the following sections, we present an overview of the proposed solution, followed by a step-by-step description of its components. 

\section{Problem Formulation \& Solution Framework}\label{subsec:meth_ovrvw}
We consider a simplex architecture with a {\em primary optimal controller (POC)} $K$ for the situation when the system states are inside a convex POR $\R{P}$ (see Fig.~\ref{fig:simplex}). 
As discussed earlier, this can be an optimising controller designed for a linearised hybrid system dynamics, or a learning-enabled controller that works effectively with certain objectives for a nominal system dynamics. 
It safely operates when initialised within the POR $\R{P}$ (see the obliquely striped region in Fig.~\ref{fig:simplex}). 
When the system state trajectory exits $\R{P}$, a backup safe controller (BSC) gain $K^S$ must be activated before the trajectory becomes unsafe or evolves beyond the SOR $\R{S}$ (the blue dashed trajectory in Fig.~\ref{fig:simplex}). 
In this simplex setting, we intend to synthesise a set of such BSCs $\mb{K}^S=\{K^S_i, \forall h_i\}$ for different sampling periods $h_i$ (for $ i\in\mathbb {Z}^{>0}$) that always ensure the following criteria.
     {\bf (i)} 
     The system must recover within a desired recovery deadline $\delta t$, i.e., $\mbx(\delta t)\in \R{P} \forall \mbx(0)\in \Runsf$, and
     %
     {\bf (ii)} 
     while recovering, the trajectory must not leave $\R{S}$, i.e., $\mbx(t)\in\R{S} \forall t\in [0,\delta t]$ (see the green dashed trajectory in Fig.~\ref{fig:simplex}).
We also intend to devise an algorithmic framework for a safe {\em state-dependent controller activation policy (SCAP)} to switch between the BSCs while maintaining the aforementioned criteria
enforcing the resource limit. 
%
\par\noindent
The overall solution methodology to solve the aforementioned problems is outlined in Fig.~\ref{fig:overview}. Following are the inputs to the proposed solution framework: {\bf 1.} the hybrid system dynamics, {\bf 2.} the safety boundary $\R{S}$, {\bf 3.} the {operating region boundary} for the POC $\R{P}$, {\bf 4.} the recovery deadline $\delta t$ and {\bf 5.} a processor utilisation upper bound $U_b$.
Given these inputs, the methodology outputs a set of BSCs that satisfy criteria {\bf (i)} and {\bf (ii)}, as mentioned in the last paragraph, and a SCAP algorithm for monitoring safety and switching between the BSCs in a resource-sensitive manner. 
In the following sections, we explain the primary components of the proposed framework.
\subsection{Step 1: Minimum Exponential Decay Rate Derivation from Recovery Deadline}
\label{sec:alpharef}
Consider that we intend to synthesise a controller gain that recovers states, starting from anywhere in $\Runsf$ to the POR $\R{P}$ within $\delta t$ time. This recovery deadline can be converted to the number of required sampling instances, $\delta k$, for a constant sampling period, $h$. As the sampling period changes, $\delta k$ also changes since, for a sampling period $h_i$, the recovery deadline $\delta k_i = \floor{\frac{\delta t}{h_i}}$. To make the formulation easier, we translate the requirement of the recovery deadline $\delta k$ into the QLF decay $\alpha_i$ corresponding to the desired recovery specification. 
\begin{proposition}
\label{propqlfdecay}
    If there exists a quadratic Lyapunov function with decay rate $\alpha_i\in (\alpha_{ref,i},1)$ such that a trajectory, starting from anywhere in $\Runsf$ recovers back to $\R{P}$($\subset \R{S}$) within $\delta k$ sampling iterations;
    then the minimum required decay rate $\alpha_{ref,i}\in (0,1)$ is:
    \begin{align}
        \label{eq:Alpha}
       \alpha_{ref,i} = 1-\Bigg(\frac{\min\limits_{\xj \in \partial \R{P}}{\norm{\xj}}}{\max\limits_{\xj \in \partial \R{S}}{\norm{\xj}}}\Bigg)^{\frac{2}{\delta k}}
    \end{align}  
\end{proposition}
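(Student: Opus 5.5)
The plan is to iterate the QLF decay inequality of Definition~\ref{defLES} along the closed-loop trajectory and turn the resulting bound on $V_i$ into a bound on the Euclidean norm. The statement is best read as follows: if the QLF $V_i=x^TP_ix$ decays at rate $\alpha_i$, then recovery within $\delta k$ steps is guaranteed exactly when $\alpha_i$ exceeds the threshold in Eq.~\ref{eq:Alpha}. So I will derive Eq.~\ref{eq:Alpha} as the smallest decay for which the worst-case start on $\partial\R{S}$ is driven into the largest norm ball contained in $\R{P}$.

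\textbf{Step 1 (iterate the decay).} From Eq.~\ref{eq:LFLES}, $V_i(x[k+1])\le(1-\alpha_i)V_i(x[k])$, so by induction
\begin{align*}
V_i(x[k_0+\delta k])\le(1-\alpha_i)^{\delta k}\,V_i(x[k_0]).
\end{align*}

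\textbf{Step 2 (pass to norms).} Here I will use the isotropic normalisation implicit in the statement, namely that the relevant level sets are measured in $\norm{\cdot}$; formally, take $P_i$ scaled so that $V_i$ is comparable to $\norm{x}^2$. In the general case one uses $\lambda_{\min}(P_i)\norm{x}^2\le V_i\le\lambda_{\max}(P_i)\norm{x}^2$, and the condition-number factor is absorbed into the ``minimum required'' qualifier. This gives
\begin{align*}
\norm{x[k_0+\delta k]}\le(1-\alpha_i)^{\delta k/2}\,\norm{x[k_0]}.
\end{align*}
This is the discrete analogue of Definition~\ref{def:gues}, with $\delta=(1-\alpha_i)^{1/2}$.

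\textbf{Step 3 (worst case start and target).} Any $x[k_0]\in\Runsf$ satisfies $\norm{x[k_0]}\le\max_{\xj\in\partial\R{S}}\norm{\xj}$, since $\R{S}$ is compact and convex about the origin. Any $x$ with $\norm{x}\le\min_{\xj\in\partial\R{P}}\norm{\xj}$ lies in $\R{P}$, because that ball is the largest origin-centred ball inscribed in the convex set $\R{P}$. It therefore suffices that
\begin{align*}
(1-\alpha_i)^{\delta k/2}\max_{\partial\R{S}}\norm{\xj}\le\min_{\partial\R{P}}\norm{\xj}.
\end{align*}
Solving this for $\alpha_i$ gives $\alpha_i\ge\alpha_{ref,i}$ with $\alpha_{ref,i}$ exactly as in Eq.~\ref{eq:Alpha}. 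Because $\R{P}\subset\R{S}$, the ratio lies in $(0,1)$, so $\alpha_{ref,i}\in(0,1)$.

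\textbf{Step 4 (minimality).} To show the threshold is the least admissible value, I will take $x[k_0]$ attaining the maximum on $\partial\R{S}$ and a direction in which the decay bound is tight, namely the slowest-mode eigendirection. If $\alpha_i<\alpha_{ref,i}$, the norm bound after $\delta k$ steps exceeds the inscribed radius of $\R{P}$, so recovery is no longer certified. Finally, I will convert $\delta k=\floor{\delta t/h_i}$ to show that the requirement depends on $h_i$.

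\textbf{Main obstacle.} The hardest step is Step 2: a general $P_i$ introduces the factor $\sqrt{\lambda_{\max}/\lambda_{\min}}$, which would alter Eq.~\ref{eq:Alpha}. I would handle it by stating the normalisation explicitly as an assumption, or by reading the norms in the proposition as $P_i$-weighted norms $\norm{x}_{P_i}=\sqrt{x^TP_ix}$, in which case the derivation is exact.
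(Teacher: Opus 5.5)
Your proof follows the paper's argument and is correct under the isotropy assumption you state. Both proofs reduce recovery to a geometric norm contraction over $\delta k$ steps. Both place the worst-case start at the outermost point of $\partial\R{S}$ and the target at the innermost point of $\partial\R{P}$, then solve for $\alpha_i$ with exponent $2/\delta k$. The paper gets the contraction from the GUES bound of Def.~\ref{def:gues} combined with $\alpha=1-e^{-2\gamma}$, whereas you iterate $\Delta V\le-\alpha_i V_i$ directly. Your route is the more careful one, for two reasons:
\begin{itemize}
\item It makes the constant visible: the $M$ of Def.~\ref{def:gues} obtained from a QLF is $\sqrt{\kappa(P_i)}$, with $\kappa(P_i)=\lambda_{\max}(P_i)/\lambda_{\min}(P_i)$. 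The paper discards it in its first implication. From $\norm{x[k+\delta k_i]}\le Me^{-\gamma\delta k_i}\norm{x[k]}$ the ratio bound follows only when $M=1$, which here means $P_i$ is a multiple of $I$.
\item Your Step 3 derives $\alpha_i\ge\alpha_{ref,i}$ as a sufficient condition from what recovery requires. The paper's inequality chain at that point, read literally, runs in the opposite direction for the actual ratio, and reaches Eq.~\ref{eq:Alpha} only through the worst-case substitution.
\end{itemize}

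Three points need fixing.
\begin{itemize}
\item Your remark that the condition-number factor is ``absorbed into the minimum-required qualifier'' is wrong. For a general SPD $P_i$ the certified threshold is $1-\bigl(\min_{\partial\R{P}}\norm{\xj}/(\sqrt{\kappa(P_i)}\,\max_{\partial\R{S}}\norm{\xj})\bigr)^{2/\delta k}$. This is strictly larger than Eq.~\ref{eq:Alpha} when $\kappa(P_i)>1$. So the formula genuinely needs your normalisation, and the paper's proof needs it too, only silently.
\item Of your two remedies, state the assumption explicitly. The $P_i$-weighted reading is exact, but it makes $\alpha_{ref,i}$ depend on $P_i$. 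In Lemma~\ref{lemLES}, $P_i$ is the unknown being solved for, with $\alpha_{ref,i}$ as an input, so that reading is circular.
\item Step 4 shows only that recovery is no longer certified below $\alpha_{ref,i}$, not that some trajectory actually misses $\R{P}$. The slowest-eigendirection remark does not make the bound tight for an arbitrary QLF with that rate. Non-certification is all that ``minimum required'' can mean here, and the paper claims no more.
\end{itemize}
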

\begin{proof}
We know from Def.~\ref{def:gues}, for a hybrid system with GUES decay rate of $\gamma>0$, its states evolve as $\norm{\mb{x}[k+\delta k_i]} \leq M\norm{\mb{x}[k]} e^{-\gamma \times \delta k_i}$ $\Rightarrow$
$\frac{\norm{\mb{x}[k+\delta k_i]}}{\norm{\mb{x}[k]}}< e^{-\gamma \times \delta k_i}$. By replacing this in the equation of $\alpha$ in Def.~\ref{defLES} we bound the desired QLF decay rate as:  
$\alpha_{i} \geq  1-\Big(\frac{\norm{x[k+\delta k_i]}}{\norm{x[k]}}\Big)^{\frac{2}{\delta k_i}}$. Since $\alpha_{ref,i}$ is the minimum required decay that ensures recovery from anywhere in $\Runsf$ to $\R{P}$, it must consider the worst case, where the most unsafe state recovers to the farthest safe state.
To account for the worst case, we replace the initial state $x[k]$ with the outermost unsafe point or the farthest point in the boundary of SOR  $\partial \R{S}$ (or in the outer boundary of $\Runsf$), i.e., $\max\limits_{x\in \R{S}} x$. Similarly, the final state   $x[k+\delta k_i]$ is replaced with the innermost point in the POR boundary (or in the inner boundary of $\Runsf$), i.e. $\min\limits_{x\in \R{P}} x$. Replacing these in the last inequality gives us Eq.~\ref{eq:Alpha}. Since $0\le \norm{x\in \Runsf}\le \max\limits_{\xj \in \partial \R{S}}{\norm{\xj}}$ and $\frac{2}{\delta k}>0$; $\Bigg(\frac{\min\limits_{\xj \in \partial \R{P}}{\norm{\xj}}}{\max\limits_{\xj \in \partial \R{S}}{\norm{\xj}}}\Bigg)\le \Bigg(\frac{\min\limits_{\xj \in \partial \R{P}}{\norm{\xj}}}{\norm{\x \in \Runsf}}\Bigg)\Rightarrow \alpha_{ref,i}\le \alpha_{i}$. Naturally, the existence of a QLF with a minimum decay rate of $\alpha_{ref,i}$ can recover system states from $\Runsf$ to $\R{P}$, 
\end{proof} 
Prop.~\ref{propqlfdecay} gives us the desired minimum QLF decay rate for a particular sampling period. As discussed in Sec.~\ref{sec:multirate_ctrl}, we choose a set of linearised hybrid systems $\mb{\cl{K}}$ with different sampling periods such that characteristic matrices for each system $\cl{K}_{h_i}\in\mb{\cl{K}}$ discretised with $h_i$ sampling period satisfy Eq.~\ref{eq:LFLES}, when the QLF decay rate $\alpha$ is replaced with $\alpha_{ref, i}$. In the following section, we derive the constraints to impose the desired convergence rate and safety while synthesising controllers for a particular sampling period. 
\subsection{Step 2: Constraint Synthesis for Desired Recovery Rate and Safety Guarantee} \label{sec:safectrl}
$\blacksquare$ {\bf Recovery:} We intend to synthesise a safe controller gain $K_i^S$ with $h_i$ sampling period,  
to ensure recovery within $\delta k_i$ sampling iterations. As mentioned above, solving the constraint in Eq.~\ref{eq:LFLES} with parameters specific to a certain sampling period $h_i$ is sufficient for this purpose. Here, the parameters for the sampling period $h_i$ are the desired decay rate $\alpha_{ref, i}$ for the {\em quadratic Lyapunov function (QLF)} $V_i=x^T P_i x$ and the discrete state transition matrix of the closed-loop $(A_i-B_iK_i^S)$. This makes the problem harder to solve than a quadratic program, since the equation contains three unknowns: ${K_i^S}^T$, ${K_i^S}$, and $P_i$ in a single term. 
As mentioned in Sec.~\ref{secIntro}, we intend to synthesise a semidefinite constraint using the substitution and {\em Schur complement} methods from this constraint, to ensure a feasible and lightweight solution. This gives us the following lemma.
\begin{lemma}[timely recovery guarantee]
\label{lemLES}
Consider a hybrid system linearised around an equilibrium point $x_{ref}$ as presented in Eq.~\ref{eq:lti_sys} having constant state and input-to-state transition matrices, $A_i$ and $B_i$, respectively, for all states in a convex and closed safe operating region $\R{S}$ 
when discretised with a sampling periodicity $h_i$. 
There exists a feedback controller gain $K_i^S$ such that the system states initialised from $\Runsf$ are guaranteed to return to its convex preferred operating region (POR) $\R{P}$ within a $\delta k_i$ number of sampling periods under the operation of $K_i^S$ only if the following LMI in Eq.~\ref{eq:LMIqlf} is satisfied, where $P_i$ is a symmetric positive definite (SPD) matrix and the bound on $\alpha_i\in [\alphaides,1)$, i.e., $\alphaides$ satisfies Eq.~\ref{eq:Alpha} for a positive $\epsilon <<1$.
%
\begin{align}
\label{eq:LMIqlf}
    \begin{bmatrix}\Piinv & A_i \Piinv -B_i Z_i\\
        (A_i \Piinv -B_i Z_i)^T & (1-\alpha_i)\Piinv\end{bmatrix} \ge \epsilon, Z_i = \Kis \Piinv
\end{align}
\end{lemma}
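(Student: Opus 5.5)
Following Proposition~\ref{propqlfdecay}, I read the ``only if'' direction as asking that the recovery guarantee be certified by a QLF $V_i=x^TP_ix$ with decay rate $\alpha_i\ge\alphaides$. Under this reading, timely recovery via such a certificate is equivalent to the Lyapunov decrease inequality of Definition~\ref{defLES} for the closed loop $A_i-B_iK_i^S$. The plan is to show that this inequality is equivalent to the LMI in Eq.~\ref{eq:LMIqlf}, using a congruence transformation, the change of variables $Z_i=\Kis\Piinv$, and a Schur complement. The strict margin $\epsilon$ plays the role of a small tolerance that turns nonstrict semidefiniteness into a numerically enforceable strict one.

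\textbf{Step 1 (from recovery to decay).} Assume $K_i^S$ returns every state in $\Runsf$ to $\R{P}$ within $\delta k_i$ steps, certified by a QLF. By Proposition~\ref{propqlfdecay}, the decay rate must satisfy $\alpha_i\ge\alphaides$ with $\alphaides$ given by Eq.~\ref{eq:Alpha}. By Definition~\ref{defLES}, the certifying QLF then satisfies
\[
(A_i-B_iK_i^S)^TP_i(A_i-B_iK_i^S)\le(1-\alpha_i)P_i .
\]

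\textbf{Step 2 (congruence).} Pre- and post-multiply the inequality by $\Piinv$, which is SPD, so the direction of the inequality is preserved. With $Z_i=\Kis\Piinv$ this gives
\[
(A_i\Piinv-B_iZ_i)^TP_i(A_i\Piinv-B_iZ_i)\le(1-\alpha_i)\Piinv .
\]
Here $P_i$ appears only as $(\Piinv)^{-1}$ in the middle factor, so the only decision variables are $\Piinv$ and $Z_i$.

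\textbf{Step 3 (Schur complement).} Since $\Piinv>0$, the Schur complement lemma says the block matrix in Eq.~\ref{eq:LMIqlf} is positive semidefinite exactly when $(1-\alpha_i)\Piinv-(A_i\Piinv-B_iZ_i)^TP_i(A_i\Piinv-B_iZ_i)\ge0$. This is precisely the inequality from Step~2. Replacing ``$\ge0$'' by ``$\ge\epsilon$'' for $0<\epsilon\ll1$ slightly tightens the requirement; any LMI solver's strict-feasibility margin already absorbs this, and it also keeps $\Piinv$ uniformly positive definite. Conversely, any feasible pair $(\Piinv,Z_i)$ yields the gain $K_i^S=Z_iP_i$, so the LMI is also a usable synthesis condition.

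\textbf{Main obstacle.} The delicate step is Step~1. Recovery does not by itself force a quadratic certificate with a uniform per-step decay, so the implication only holds when, as in Proposition~\ref{propqlfdecay} and Definition~\ref{defLES}, recovery is understood as certified by a QLF with rate $\alpha_i\in[\alphaides,1)$. I would state this interpretation explicitly.

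A second subtlety is the mismatch between the nonstrict inequality and the $\epsilon$-margin. The ``only if'' direction holds exactly for $\epsilon=0$. For $\epsilon>0$, I would first perturb $\alpha_i$ slightly within $(\alphaides,1)$ to obtain a strict decrease, then rescale $P_i$ so that the Schur complement exceeds $\epsilon$. Because the LMI is homogeneous in $(\Piinv,Z_i)$, such a scaling is always available.
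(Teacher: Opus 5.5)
Your proposal is correct and follows essentially the paper's route: start from the Lyapunov decrease inequality of Definition~\ref{defLES} with $\alpha_i\ge\alpha_{ref,i}$, then combine a Schur complement, a congruence and the substitution $Z_i=K_i^SP_i^{-1}$. The paper applies the Schur complement first and then the congruence by $\mathrm{diag}(I,P_i^{-1})$, while you apply the congruence by $P_i^{-1}$ first, which is immaterial; your explicit caveat that ``only if'' needs recovery to be certified by a QLF (used tacitly in the paper via Proposition~\ref{propqlfdecay}) and your homogeneity-scaling treatment of the $\epsilon$-margin are more careful than the paper's proof but do not change the argument.
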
  
\begin{proof} 
Using the parameters specific to the sampling period $h_i$ in Eq.~\ref{eq:LFLES} we get : $(A_i-B_iK_i^S)^T P_i (A_i-B_iK_i^S) \leq (1-\alpha_i)P_i$ and $P_i > 0$ . 
%
Since $\alpha_i \in (\alphaides,1)$ and $\alphaides >0$, the second inequality $P_i > 0 \Rightarrow (1-\alpha_i)P_i > 0$. 
Now that $P$ and $(1-\alpha)P$ both are SPD, they are self-adjoint matrices (i.e., $P=adjoint(P)$). Therefore, we can apply the Schur complement method~\cite{boyd2004convex} on the combined inequality, which gives us the following LMI.  
$$\begin{bmatrix}P_i^{-1} & (A_j P_i+B_jK_i^S)\\
        (A_j P_i+B_j K_i^S)^T & (1-\alpha_i)P_i \end{bmatrix} \ge \epsilon > 0$$ 
Here, a small positive $\epsilon<<1$ is used to introduce equality. Multiplying the LHS of above LMI with $\begin{bmatrix}I & 0\\
0 & P_i^{-1} \end{bmatrix}>0$ from both sides (pre and post multiplications), and substituting $K_i^S\Piinv$ with $Z_i$, we get LMI~\ref{eq:LMIqlf}.
Therefore, solving LMI~\ref{eq:LMIqlf} for a desired $\alpha_i$ is a sufficient condition for the existence of the QLF $V_i = (x-x_{ref})^T P_i (x-x_{ref})$. This proves that using the controller gain $\Kis$ that is a solution to LMI~\ref{eq:LMIqlf} for any $\alpha_i\in(\alphaides,1]$ ensures that under the operation of $K_i^S$, system states initialised from $\Runsf$ will return to $\R{P}$ within a $\delta k_i$ number of sampling periods as stated in Prop.~\ref{propqlfdecay}.
\end{proof}
Note that Eq.~\ref{eq:LMIqlf} is a semidefinite constraint with $\Piinv$ and $Z_i$ as unknown variables. 
We can solve Eq.~\ref{eq:LMIqlf} iteratively for different $\alpha_i\in(\alphaides,1]$ until we get a satisfying solution $\Kis$, where  $\alphaides$ is computed using Prop.~\ref{propqlfdecay}.
This kind of stability that hinges on the existence of a QLF $V_i=(\mb{x}-x_{ref})^T P_i(\mb{x}-x_{ref})$ for all $\mb{x}\in \Runsf$ is also known as {\em quadratic stability} of the system around the equilibrium/desired reference point $x_{ref}\in \R{S}$~\cite{boyd2004convex}. However, this {\em does not ensure that the controller will keep the system states safe  (i.e., always inside SOR)} while recovering within the desired deadline. Therefore, we need to ensure that the level sets of $V_i$ corresponding to $\Kis$ remain inside $\R{S}$ while maximising them to ensure safe operation across the largest region in $\R{S}$.
\par\noindent $\blacksquare$ {\bf Safety:} This brings us to the second LMI formulation that ensures safety under the operation of $\Kis$. For this purpose, we use the notion of  {\em forward invariance}, i.e., the system property that always holds as the time progresses. The safety barrier function, or SBF, has a forward-invariant zero-level set, i.e., the trajectories initialised inside the zero-level set remain inside in the future. Since we are using QLFs, their level sets (explained in Sec.~\ref{sec:bg}) are always ellipsoidal. A $c$-level set of QLF $V_i$ is ${\cl{E}_c}: \mb{x}^T \Qiinv \mb{x} \leq c$ and it basically represents an ellipsoid with its centre at $x_{ref}=0$ and radius $c Q_i$. If this is the largest level set of the QLF corresponding to a controller gain, inside the safe state space $\R{S}$, then the zero-level set of $V_i-c$ works as a safety barrier. Moreover, the level sets of any LF are forward-invariant. Therefore, $V_i-c$ becomes a natural SBF candidate for the safe controller $\Kis$. Hence, we define the safety in the Lyapunov sense as below.
\begin{definition}
\label{defFI}
If there exists a candidate QLF  $V = x^T Q^{-1} x$ ($Q> 0$ and symmetric) for a linearised time-invariant hybrid system given in Eq.~\ref{eq:lti_sys} operating under feedback gain $K$ and the largest ellipsoidal $c$-level set of $V$ inside the safe operating region $\R{S}$, is ${\cl{E}_c}: V(x) = x^T Q^{-1} x \leq c$, then we signify this region ${\cl{E}_c}$ as the \emph{largest} safe forward-invariant region (SFIR) for controller $K$ as the system remains inside when initialised from anywhere inside ${\cl{E}_c}$.
\end{definition}
\noindent Naturally, our goal while synthesising the controller will be to maximise the largest SFIR for a required decay rate and $\Kis$ and confine it inside $\R{S}$ to formulate a safety barrier. The idea is depicted in Fig.~\ref{fig:lyapunov}, where there are two such ellipsoids ${\cl{E}_{c_1}}\subset \R{S}$ and ${\cl{E}_{c_2}}\subset \R{S}$, which are the largest SFIRs corresponding to two QLFs $V_1$ and $V_2$ respectively, which have decay rates $\alpha_1,\alpha_2$. This ensures all the states originating inside the ellipsoidal approximations of $\R{S}$ are bounded within $\R{S}$. We propose the following lemma to enforce this while synthesising an BSC gain $K_i^S$ having QLF with the desired decay rate. 
\begin{lemma}[safety guarantee]
\label{lemSafety}
Consider a hybrid system linearised around an equilibrium point $d$, discretised with a sampling periodicity $h_i$, having constant state and input-to-state transition matrices, $A_i$ and $B_i$, respectively, for all states in its convex and closed safe operating region (SOR) $\R{S}$ (presented in Eq.~\ref{eq:lti_sys}). $A_S (x-d)\le b_S$ is the polytopic representation of $\R{S}$.
The QLF\; $V_i=(\mb{x}-d)^T \Qiinv (\mb{x}-d)$ under the operation of a feedback controller gain $\Kis$ has the largest safe forward-invariant $c_i$-level set $\cl{E}_{c_i}$ inside $\R{S}$ when $\overline{Q_i}$ is maximised under the following LMI constraints. 
%
\begin{align}
\label{eq:LMIsafe}
&\begin{bmatrix}\overline{Q_i} & \overline{U_i}\\
    \overline{U_i}^T & I\end{bmatrix} \leq 0 \text{ for a symmetric $\overline{U_i}=c_i U_i > 0$}\\ \label{eq:LMIsafe2}
& and\ \norm{{a_S^{(j)}}^T\overline{U_i}}+ {a_S^{(j)}}^T d \le b_S^{(j)}\ \forall \text{ $j$-th rows of }A_S, b_S
\end{align} 
\end{lemma}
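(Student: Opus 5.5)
The plan is to reduce the safety claim to the standard ``ellipsoid inside polytope'' containment condition, and to combine it with the forward invariance of Lyapunov level sets supplied by Lemma~\ref{lemLES}.

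\textbf{Step 1: forward invariance.} Let $\Kis$ and $Q_i=P_i^{-1}$ come from Lemma~\ref{lemLES}. Then $V_i(\mb{x}[k+1])\le(1-\alpha_i)V_i(\mb{x}[k])\le V_i(\mb{x}[k])$ along closed-loop trajectories. Hence every sublevel set $\cl{E}_{c_i}=\{\mb{x}:(\mb{x}-d)^T\Qiinv(\mb{x}-d)\le c_i\}$ is forward invariant. It follows that $\cl{E}_{c_i}$ is a safe forward-invariant region in the sense of Definition~\ref{defFI} as soon as $\cl{E}_{c_i}\subseteq\R{S}$. What remains is to characterise this containment convexly and to show that maximising $\overline{Q_i}$ yields the largest such set.

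\textbf{Step 2: parametrising the ellipsoid.} I would write $\cl{E}_{c_i}=\{d+\overline{U_i}s:\norm{s}\le 1\}$, where $\overline{U_i}$ is the symmetric positive definite square root of $c_iQ_i$, so that $\overline{U_i}\,\overline{U_i}^T=c_iQ_i=:\overline{Q_i}$ up to scaling. The LMI~\eqref{eq:LMIsafe} is the Schur-complement encoding of the relation between $\overline{Q_i}$ and $\overline{U_i}\,\overline{U_i}^T$; I would read it, with the sign convention $\overline{Q_i}\preceq\overline{U_i}\,\overline{U_i}^T$, as a relaxation of that equality which remains convex in $(\overline{Q_i},\overline{U_i})$. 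I expect this step to be the main obstacle. The inequality as printed, ``$\le 0$'', cannot hold with an identity block, so I will take the intended meaning to be the Schur form
\[
\begin{bmatrix}\overline{Q_i} & \overline{U_i}\\ \overline{U_i}^T & I\end{bmatrix}\succeq 0,
\]
which gives $\overline{Q_i}\succeq\overline{U_i}\,\overline{U_i}^T$ and $\overline{Q_i}$ bounded. I will then argue that at the maximiser this holds with equality. The reasoning is that maximising $\overline{Q_i}$ (in trace or log-det) pushes the ellipsoid outward until the shape factor is tight.

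\textbf{Step 3: polytope containment.} For each facet $j$, I need $\sup_{\norm{s}\le1}{a_S^{(j)}}^T(d+\overline{U_i}s)\le b_S^{(j)}$. By Cauchy--Schwarz this supremum equals ${a_S^{(j)}}^Td+\norm{\overline{U_i}^Ta_S^{(j)}}$, and since $\overline{U_i}$ is symmetric this is exactly~\eqref{eq:LMIsafe2}. This is a second-order cone constraint, hence convex, and it is equivalent to $\cl{E}_{c_i}\subseteq\R{S}$.

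\textbf{Step 4: maximality.} Among all ellipsoids in the family $\{V_i\le c\}$, the feasible set of $(\overline{Q_i},\overline{U_i})$ is exactly the set of ellipsoids contained in $\R{S}$. Since $\R{S}$ is convex and compact, the maximiser of the concave objective over this set exists, and it corresponds to the largest level $c_i$ such that some facet constraint is active, i.e.\ the ellipsoid touches $\partial\R{S}$. By Definition~\ref{defFI} this ellipsoid is the largest SFIR. Since $V_i-c_i$ vanishes on its boundary, $V_i-c_i$ is a valid SBF, which completes the argument.
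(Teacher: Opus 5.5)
Your route is the paper's: write $\cl{E}_{c_i}=\{d+\overline{U_i}s:\norm{s}\le 1\}$, bound each facet using $\norm{s}\le 1$ (your Cauchy--Schwarz step) to obtain Eq.~\ref{eq:LMIsafe2}, and link $\overline{Q_i}$ to $\overline{U_i}$ through a Schur complement. Steps~1 and~3 are fine, and you are right that the printed ``$\le 0$'' is infeasible. The gap is in Step~2, and it invalidates Step~4.

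Under the reading you commit to (the block matrix of Eq.~\ref{eq:LMIsafe} being $\succeq 0$), the Schur complement of the $I$ block gives $\overline{Q_i}\succeq\overline{U_i}\,\overline{U_i}^T$. This bounds $\overline{Q_i}$ only from \emph{below}, not the boundedness you claim. Two things follow.
(a) Maximising $\overline{Q_i}$ has no maximiser. The facet constraints involve only $\overline{U_i}$, and the recovery LMI in Eq.~\ref{eq:LMIqlf} is preserved under $(P_i^{-1},Z_i)\mapsto(tP_i^{-1},tZ_i)$ for every $t\ge 1$. So nothing forces equality at an optimum.
(b) More seriously, $\overline{Q_i}\succeq\overline{U_i}\,\overline{U_i}^T$ means $\{d+\overline{U_i}s:\norm{s}\le1\}\subseteq\cl{E}_{c_i}$. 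Step~3 therefore certifies only that the \emph{inner} ellipsoid lies in $\R{S}$. It says nothing about the level set of $V_i$, which is the set that has to be safe.

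The other direction, $\overline{Q_i}\preceq\overline{U_i}\,\overline{U_i}^T$, is the one you mention first and what the paper means by a ``lower bound for $\overline{U_i}$''. It does give the right inclusion, but it is not convex. Already in the scalar case, $(q,u)=(1,1)$ and $(9,3)$ satisfy $q\le u^2$ while their midpoint $(5,2)$ does not. So it cannot be written as an LMI at all, and your remark that it ``remains convex'' is false. Neither sign of the block matrix gives a convex constraint that delivers Step~4. The paper's own argument, which writes the Schur relation as ``$=0$'', does not close this either.

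The repair is to drop $\overline{U_i}$ altogether. For $\overline{Q_i}\succ 0$ one has $\sup\{a^T(\mbx-d):(\mbx-d)^T\overline{Q_i}^{-1}(\mbx-d)\le 1\}=\sqrt{a^T\overline{Q_i}\,a}$. Hence $\cl{E}_{c_i}$ lies in the half-space $a^T(\mbx-d)\le\beta$ (with $\beta\ge 0$) if and only if $a^T\overline{Q_i}\,a\le\beta^2$.
\begin{itemize}
\item This is linear in $\overline{Q_i}=c_iP_i^{-1}$, the same variable as in Lemma~\ref{lemLES}, so it can be imposed jointly with Eq.~\ref{eq:LMIqlf}.
\item Since $\R{S}$ is bounded, these facet constraints bound $\overline{Q_i}$. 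Maximising $\log\det\overline{Q_i}$ (or its trace) is then a well-posed convex program. Its optimiser is forward invariant by your Step~1 and contained in $\R{S}$.
\item For a fixed $Q_i$, the largest admissible level is $c_i^\ast=\min_j \beta_j^2/({a_S^{(j)}}^TQ_i\,a_S^{(j)})$. This is the precise form of your touching argument in Step~4.
\end{itemize}
Here $\beta_j=b_S^{(j)}$ for the representation $A_S(\mbx-d)\le b_S$ given in the statement, and $\beta_j=b_S^{(j)}-{a_S^{(j)}}^Td$ for $A_S\mbx\le b_S$. Your Step~3, like Eq.~\ref{eq:LMIsafe2} with its ${a_S^{(j)}}^Td$ term, silently uses the latter. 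So the claimed equivalence with $\cl{E}_{c_i}\subseteq\R{S}$ is not for the polytope as stated.
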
  
\begin{proof}
   The polytope $\R{S}$ can be expressed using the following inequalities: ${a_S^{(j)}}^T(\mbx -d)\le b_S^{(j)}$ for each $j$-th row of $A_S,b_S$. 
   For an ellipsoid $\cl{E}:(\mbx-d)^T P_i (\mbx-d)\le c_i\Rightarrow (\mbx-d)^T \overline{P_i} (\mbx-d)\le 1\Rightarrow  
   \norm{\overline{U_i}^{-1} (\mbx-d)}\le 1\ \Rightarrow \norm{z}\le 1$, 
   when $z = \overline{U_i}^{-1}(\mbx - d)$ and $\overline{U_i}^T\overline{U_i}=\overline{Q_i}$. Using the Schur complement method, we translate the equality $\overline{U_i}^T\overline{U_i}=\overline{Q_i}$ to $\begin{bmatrix}\overline{Q_i} & \overline{U}_i\\    \overline{U}_i^T & I\end{bmatrix}=0$.  Therefore $\begin{bmatrix}\overline{Q_i} & \overline{U}_i\\    \overline{U}_i^T & I\end{bmatrix}=0$.
   %
   %
   With affine transformations 
   the polytopic inequalities for each $j$-th row of $A_S,b_S$ in $z$ coordinate become, ${a_S^{(j)}}^T\overline{U_i}z\le b_S^{(j)}$. 
   Since maximum value of $\norm{z}$ is 1 as shown earlier, therefore $\norm{{a_S^{(j)}}^T\overline{U_i}}\le b_S^{(j)}$. Since $a_S^{(j)}, b_S^{(j)}$ are constants, and $\overline{U_i}$ is proportional to $\mbx$, to maximise the area of the ellipsoid inside the polytope $\R{S}$, we need to maximise $\overline{U_i}$ or $\overline{Q_i}$respecting the above inequality, i.e., the LMI in Eq.~\ref{eq:LMIsafe2}. Since $\begin{bmatrix}\overline{Q_i} & \overline{U}_i\\    \overline{U}_i^T & I\end{bmatrix}=0$, while maximising we can consider the LMI in Eq.~\ref{eq:LMIsafe} as the lower bound for $\overline{U_i}$.
   %
\end{proof}
The maximisation of $c_iQ_i$ in the presence of the above LMIs ensures that the safe controller corresponding to the QLF $V_i = x^T Q_i^{-1} x$ ensures that the trajectories initialised within SOR $\R{S}$ remain confined inside it. 
As defined in Def.~\ref{defFI}, here $V_i-c_i$ acts as the SBF $B_i$ for synthesising the BSC gain $K_i^S$ given a linearised hybrid system discretised with $h_i$ sampling period. Synthesising $K_i$ using the constrains and objectives outlined in Lemma~\ref{lemLES} and~\ref{lemSafety}, we can ensure that any potentially unsafe states $x\in \Runsf$ returns back to the POR $\R{P}$ within ${\delta_{k}}_{i}$ sampling iterations without leaving the largest SFIR for $K_i^S$, i.e., $\forall k\in [0, {\delta_{k}}_{i}]\ x[k]\in \R{S}$. The sampling periodicities of these BSCs are decided based on their decay rates and SFIRs, as discussed in the next step.
%
\subsection{Step 3: Synthesis of Multi-Rate BSCs and Safe \& Stable Switching Logic} \label{sec:switchctrl}
We consider a resource-constrained CPS setting, where control tasks are scheduled periodically based on their operational priorities. A limited amount of processing bandwidth is shared among multiple controllers in such setups. Safety and timely recovery are not the only concerns while deploying the solution in such a setting; the solution also needs to be resource-aware. 
\par \noindent $\blacksquare$ \textbf{ Overall multi-rate BSC Synthesis Problem:} 
To enable minimum usage by limiting the processing bandwidth, we synthesise multiple such BSCs for different sampling periodicities. This is done by choosing a base periodicity $h_0$, less than the operating periodicity of POC (usually $1/h_0 \approx $ 20 to 30 times the bandwidth frequency observed from the Bode plot of the open-loop transfer function) and its multiples, i.e., $2h_0, 3h_0,\ldots$ These are chosen such that the closed loops are stable, and BSCs for each of these sampling periods can be synthesised by optimising the objectives in the presence of the constraints derived in Lemma~\ref{lemLES},~\ref{lemSafety}. Therefore, for a hybrid system linearised around the equilibrium point $x_{ref}\in\cl{X}_S$ as defined in Eq.~\ref{eq:lti_sys}, to synthesise an BSC gain $K_i^S=Z_iP_i^{-1}$ for a sampling period $h_i=m h_0$, $m\ge 1$, we need to solve the following constrained optimisation problem: 
\begin{problem}[Overall Synthesis Problem]
\label{prob:synth}\;
\par \noindent
$\square$ Find $P_i$ by solving the following semidefinite problem.
\begin{align}\label{eq:opt}
    &\hspace{1.2cm}\max P_i^{-1} 
    \text{ under the following LMI constraints}:\\\nonumber
    \text{(\ref{eq:opt}.1) }& \text{ Eq.~\ref{eq:LMIqlf} with $\alpha_{ref,i}$ from Eq.~\ref{eq:Alpha},}\\\nonumber
    \text{(\ref{eq:opt}.2) }&  \text{ Eq.~\ref{eq:LMIsafe}, Eq.~\ref{eq:LMIsafe2} with $Q_i=P_i^{-1}$ and for a constant $c_i$}. 
    \end{align}
%
$\square$ Find a $c_i$ for the derived $Q_i$ such that the ellipsoid $\cl{E}_{c_i}:x^T \Qiinv x\le c_i$ is the largest SFIR for BSC gain $K_i$ as defined in Def.~\ref{defFI} or the SBF $B_i = x^T \Qiinv x-c_i$, where $x = \mbx - x_{ref}$.
\qed
\end{problem}
Since the objective and constraints in this problem are convex, linear and involve searching for an SPD matrix, it is a semidefinite programming problem. We solve it with $c_i=1$, i.e., $Q_i=\overline{Q_i}$ in Eq.~\ref{eq:LMIsafe}. 
If the ellipsoid $\cl{E}_{c_i}:x^T \Qiinv x\le c_i$ is inside and smaller than the convex SOR polytope $\R{S}:A_Sx+b_S\le 0$, $c_i$ should be increased until $\cl{E}_{c_i}$ inscribes $\R{S}$. 
If $\cl{E}_{c_i}$ is larger than $\R{S}$ and covers it, $c_i$ should be decreased until $\cl{E}_{c_i}$ inscribes $\R{S}$. 
\par An example of such synthesised BSCs operating in their SFIRs is depicted in Fig.~\ref{fig:lyapunov}. Here we switch between two BSC gains, namely $K^S_2$ and $K^S_1$. In the diagram on the left side of Fig.~\ref{fig:lyapunov} (i.e., the diagram in the middle of Fig.~\ref{fig:overview}), the trajectory is marked with a blue dotted line. At $time =0$, the BSC gain $K^S_2$ is used as the system state $x[0]$ is inside the SFIR of $K^S_2$ $\cl{E}_{c_2}$ (marked with black boundary). Notice the diagram on the right side of Fig.~\ref{fig:lyapunov}; the SFIR $\cl{E}_{c_2}$ inscribes the safe region $\R{S}$ (black outlined ellipse inscribes the grey region). How $\cl{E}_{c_2}$ is computed as the SBF, using the QLF $V_2$, is also presented in the black text box in the middle. The blue star shows when the BSC gain $K^S_1$ is activated as the system trajectory enters the SFIR of $K^S_1$. i.e., $\cl{E}_{c_1}$ (depicted as the ellipse with grey boundary). A black star marks the point where the system reverts to the POC as the trajectory recovers inside the POR. 
\par \noindent $\blacksquare$ \textbf{ Feasibility of the Synthesis Problem in Eq.~\ref{eq:opt}:} 
Note that this constrained optimisation problem might not be feasible for every desired $\alpha_{ref, i}$ and for any periodicity $h_i$. A multi-rate synthesis solution also helps in this case. We can search for a base periodicity $h_0$ or different values of $m$ such that the optimisation problem becomes feasible. 
If the base periodicity does not support the minimum QLF decay rate $\alpha_{ref,0}$, we may need to extend the deadline $\delta t$ (or $\delta k_0$) and synthesise BSCs for a lower $\alpha_{ref,0}$. 
\par \noindent $\blacksquare$ \textbf{ BSC Activation Policy Design:} Given the set of synthesised multi-rate BSCs, a stable and safe {\em state-dependent controller activation policy or SCAP} must be designed that switches between these controllers. It switches to an BSC such that the current state belongs to its SFIR (see Def.~\ref{defFI}). It should also enable the BSCs, while the system trajectory moves beyond $\R{P}$ (but within $\R{S}$) and switch back to POCs as the trajectory recovers inside $\R{P}$. During these state-dependent switching decisions, SCAP should therefore ensure the following: {\bf 1.} while switching from the POC to any of these BSCs, the stability is not hampered; {\bf 2.} while switching between the BSCs with different sampling periodicities, the safety and recovery rate (as derived in Prop.~\ref{propqlfdecay}) is maintained.
\par Note that each of the POCs, denoted using $K_i$, is ensured to be stable inside the POR $\R{P}$, and each of the BSCs is guaranteed to be stable and safe within SOR $\R{S}$, we switch to the BSCs from the POCs as the system states evolve outside $\R{P}$. As shown in~\cite{liberzon} (Chapter 3.4.2, Condition 1), in such a scenario where the subsystems have different QLFs, a switching logic exists that ensures quadratic stability even between unstable subsystems. However, in our case, we have to meet a desired recovery deadline or guarantee a desired exponential convergence rate while switching between BSCs safely. 
%
\begin{lemma}[Safe \& Stable Switchability Guarantee]
\label{lemSwitch}
    Consider a switchable set of BSCs $\mb{K}^S=\{K_1^S, K_2^S,\ldots\}$ synthesised by solving the optimisation problem in Prob.\ref{prob:synth} for different periodicities. Given QLF of $K^S_i$, $V_i$,  its decay rate, $\alpha_i\in [\alpha_{ref,i},1]$, and its SBF $B_i=V_i-c_i$, a state-dependent controller activation policy $\pi(x)$ as defined below, ensures the recovery deadline of $\delta k$ from anywhere in $\R{S}$ (as defined in Def.~\ref{defFI}) while switching between these controller gains. 
    \begin{align}\label{eq:scap}
        \pi(x) =arg \max\limits_{K_i^S\in \overline{\mb{K}^S}} \alpha_iV_i(x),\  \overline{\mb{K}^S}= \{K^S_i\   s. t.\ V_i(x)-c_i\le 0\}
    \end{align}
\end{lemma}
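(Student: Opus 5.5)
The plan is to treat the closed loop under $\pi$ as a state-dependent switched system and use the multiple Lyapunov function argument of~\cite{liberzon}, with $\max$-type selection, to get an exponential bound uniform across switches. The proof has two parts. \emph{Safety} shows that every switching decision keeps the state inside some SFIR, hence inside $\R{S}$. \emph{Timely recovery} shows that the active Lyapunov value decays at least by the factor $(1-\alpha_{ref})$ per step, where $\alpha_{ref}=\min_i \alpha_{ref,i}$ is the weakest guaranteed rate. This decay then converts to the deadline $\delta k$ through Prop.~\ref{propqlfdecay}.

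For safety, let $x[k]\in\R{S}$. I first argue that $\overline{\mb{K}^S}\neq\emptyset$ for every $x\in\R{S}$. This is where I will use that the SFIRs $\cl{E}_{c_i}$ from Prob.~\ref{prob:synth} are scaled until they inscribe $\R{S}$; the lemma's "from anywhere in $\R{S}$" is to be read as the union $\bigcup_i\cl{E}_{c_i}$. Suppose $\pi$ selects $K_j^S$ with $V_j(x[k])\le c_j$. By Lemma~\ref{lemLES}, $V_j(x[k+1])\le(1-\alpha_j)V_j(x[k])\le c_j$. So $x[k+1]\in\cl{E}_{c_j}\subseteq\R{S}$ by Lemma~\ref{lemSafety}, and $K_j^S$ stays admissible at $k+1$. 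Hence $\overline{\mb{K}^S}$ is never empty along the trajectory, and by induction $x[k]\in\R{S}$ for all $k$. This also shows that a switch is never forced, only chosen.

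For recovery, define the common function $W(x)=\max_{i\in\overline{\mb{K}^S}(x)}\alpha_iV_i(x)$, which is exactly the value attained by $\pi$. I want the non-increase condition of~\cite{liberzon} (Condition 1 in Ch.~3.4.2): on each switch from $j$ to $l$, the new active function at the switching instant is no larger than the value the previous one would have had. Within a run of constant mode $j$, the one-step decay $V_j(x[k+1])\le(1-\alpha_j)V_j(x[k])$ holds. At a switch, the argmax rule gives $\alpha_lV_l(x)\ge\alpha_jV_j(x)$. This inequality points the wrong way for a standard MLF argument, so the main obstacle is here. I would first try to show that $W$ itself decreases: $W(x[k+1])$ is a maximum over indices $i$ that each satisfy their own decay bound at $x[k]$ only if mode $i$ were active. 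To bridge this I would use the uniform sandwich $\lambda_{\min}(P_i)\norm{x}^2\le V_i\le\lambda_{\max}(P_i)\norm{x}^2$ to compare any two $V_i,V_j$ up to a constant $\mu=\max_{i,j}\lambda_{\max}(P_i)/\lambda_{\min}(P_j)$. If $W$ itself cannot be shown to decrease, I would instead impose the dwell-time type condition that the number of switches is finite: the finitely many SFIRs are nested along a decaying trajectory, so each mode is entered at most once in the order of shrinking ellipsoids. That leaves at most $|\mb{K}^S|$ switches, each costing a factor $\mu$.

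Combining these, $\norm{x[k]}\le \sqrt{\mu^{N}}\,(1-\alpha_{ref})^{k/2}\norm{x[0]}$, with $N$ the number of switches. Feeding this GUES-type bound (Def.~\ref{def:gues}) into the worst-case ratio argument of Prop.~\ref{propqlfdecay}, with $\max_{\partial\R{S}}\norm{x}$ and $\min_{\partial\R{P}}\norm{x}$, gives entry into $\R{P}$ within $\delta k$ steps. The constant $\mu^N$ must be absorbed either by the margin $\alpha_i>\alpha_{ref,i}$ or by the fact that each $\alpha_{ref,i}$ already targets recovery from all of $\Runsf$. I expect the control of the switching overhead $\mu^N$ to be the delicate step. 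If it cannot be absorbed, I would state the result with $\alpha_{ref}$ recomputed from Eq.~\ref{eq:Alpha} using an effective deadline $\delta k-\lceil N\log\mu/\log(1/(1-\alpha_{ref}))\rceil$.
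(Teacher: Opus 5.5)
Your safety half is sound at the sampling instants, and more careful than the paper's: it checks that the active SFIR stays admissible and that $\overline{\mb{K}^S}\neq\emptyset$, and reading ``anywhere in $\R{S}$'' as $\bigcup_i\cl{E}_{c_i}$ is the right repair.

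\textbf{The gap is in recovery}, exactly where you flagged it, and neither fallback closes it. Your bound of at most $|\mb{K}^S|$ switches is false, for two reasons.
\begin{itemize}
\item The $\cl{E}_{c_i}$ are each scaled to inscribe $\R{S}$, so they have different shapes and in general overlap without being nested.
\item More importantly, $\pi$ is not determined by which ellipsoids contain $x$. Wherever several modes are admissible, $\arg\max_i\alpha_iV_i(x)$ depends only on the direction of $x$, since each $\alpha_iV_i$ is homogeneous of degree two. The switching surfaces $\{\alpha_iV_i=\alpha_lV_l\}$ are therefore cones through the origin, and the closed loop can cross them at every step.
\end{itemize}
Because argmax makes the active weighted function jump \emph{up} at each switch ($\alpha_lV_l(x)\ge\alpha_jV_j(x)$), $N$ can be of order $k$. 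Then $\mu^N(1-\alpha_{ref})^k$ need not tend to zero. There is also no margin to absorb it, since the lemma allows $\alpha_i=\alpha_{ref,i}$.

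This is not just a weak estimate: per-mode certificates alone are insufficient. Take closed-loop matrices $F_1=\begin{bmatrix}0.5&0\\2&0.5\end{bmatrix}$, $F_2=F_1^T$ with $P_1=\mathrm{diag}(1,0.01)$, $P_2=\mathrm{diag}(0.01,1)$. Then $F_i^TP_iF_i<\frac{1}{2}P_i$ for both modes. Yet with $\alpha_1=\alpha_2$ the rule picks mode $1$ exactly when $|x_1|\ge|x_2|$, every step lands in the other mode's region, and $\norm{x}$ grows by a factor of at least $\sqrt{5}/2$ at every step where both modes are admissible. Nothing in Prob.~\ref{prob:synth} couples $P_i$ to another mode's closed loop, so this cannot be excluded.

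The paper's proof uses the same switching-constant skeleton. It gets past this point only by asserting the comparison ``for any $\mu_{\pi_k}>0$'', which does not bound $\prod_k\mu_{\pi_k}$, so it cannot supply your missing step. Your last fallback, shrinking the deadline by $N\log\mu/\log(1/(1-\alpha_{ref}))$, changes the statement and still needs the bound on $N$ that you do not have.

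\textbf{What is missing is an ingredient that couples the modes}, so either the policy or the hypotheses must change.
\begin{itemize}
\item The simplest fix is to reverse the selection. Take $\pi(x)=\arg\min_{i\in\overline{\mb{K}^S}(x)}V_i(x)/c_i$ and $W(x)=\min_{i\in\overline{\mb{K}^S}(x)}V_i(x)/c_i$. Your own invariance step gives $j=\pi(x)\in\overline{\mb{K}^S}(x^+)$, so $W(x^+)\le V_j(x^+)/c_j\le(1-\alpha_j)W(x)$, with no switch counting at all.
\item Keeping argmax instead needs an added hypothesis. Examples are cross-mode inequalities $\alpha_iF_j^TP_iF_j\le(1-\alpha)\alpha_jP_j$ for all $i,j$ (with $F_j=A_j-B_jK_j^S$), which make $\max_i\alpha_iV_i$ decrease, or a dwell time $\tau$ with $\mu(1-\alpha_{ref})^{\tau}<1$.
\end{itemize}

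Two further repairs are needed in your final conversion.
\begin{itemize}
\item Steps of mode $i$ last $h_i$, and $\min_i\alpha_{ref,i}$ is the base-period value. So $\delta k_0$ mixed steps can take up to $\delta k_0h_{\max}\gg\delta t$. Account per mode instead: $\prod_k(1-\alpha_{\sigma_k})\le\rho^{2\sum_k1/\delta k_{\sigma_k}}\le\rho^{2T/\delta t}$ after elapsed time $T$. Here $\rho$ is the norm ratio in Eq.~\ref{eq:Alpha}, and the last step uses $\delta k_ih_i\le\delta t$.
\item State recovery through sublevel sets rather than norms. $\{V_i\le\rho^2c_i\}=\rho\,\cl{E}_{c_i}$ lies in the ball of radius $\rho\max_{\partial\R{S}}\norm{x}=\min_{\partial\R{P}}\norm{x}$, hence in $\R{P}$. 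This avoids the condition-number factor that your $\norm{x[k]}$ bound, and Prop.~\ref{propqlfdecay}, silently drop even when $N=0$.
\end{itemize}
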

\begin{proof}
    Consider the state-dependent switching function $\pi(x[k])=j$, which denotes that at the $k$-th sampling iteration, $K_j^S \in \mb{K}^S$ is activated. Consider that the switched system starts from the controller $K^S_{\pi(x[0])}$ that has a QLF $V_{\pi(x[0])}(x[0])$. We simplify these notations as $K^S_{\pi_0}$ and $V_{\pi_0}$, respectivley. Now, safety is always ensured since at any $k$-th sampling instance, we always choose from the subset $\overline{\mb{K}^S}= \{K^S_j\ s. t.\ V_{\pi_k}-c_{\pi_k} = B_{\pi_k}\le 0\}$, i.e., the BSCs with a non-positive SBF values ensuring safety 
    as defined in Def.~\ref{defFI} (used similar notation to express the SBF as well).
    \par Following Eq.~\ref{eq:scap}, after multiple state-dependent switching between BSCs, the discrete-time QLF decay after $\delta k$ sampling periods is as follows: 
    $\Delta_{\delta k} V \le \sum\limits_{k\in [0, \delta k]}(-\alpha_{\pi_k}) V_{\pi_k}$. A switchable QLF at $k$-th sampling period can be expressed using a switching constant $\mu_{\pi_{k-1}} > 0$ as $V_{\pi_k} \le \mu_{\pi_{k}}V_{\pi_{k-1}}$. When we do not switch to a different BSC at the $k-1$-th iteration, $\mu_{\pi_k} = 1$. Therefore, considering all BSC swithings in duration $[0,\delta k]$, we get $\Delta_{\delta k} V \le \sum\limits_{k\in [0, \delta k]}(-\alpha_{\pi_k})\mu_k V_{\pi_0}$.
    %
    Our switching policy ensures $-\alpha_{\pi_k} V_{\pi_k}=-\alpha_{\pi_k} \mu_{\pi_k}V_{\pi_{0}}\le -\alpha_{i} V_{i}\le -\alpha_{ref,0} \mu_{i}V_{0}$ $\forall K^S_i\in \mb{K}^S$ for any $k\in[0,\delta k]$ since $-\alpha_i\le -\alpha_{ref}$. Therefore $\sum\limits_{k\in [0, \delta k]}-\alpha_{\pi_k}\mu_{\pi_k} V_{\pi_0}\le \sum\limits_{k\in [0, \delta k]}-\alpha_{ref, \pi_0} \mu_{\pi_k} V_{\pi_0}$ for any $\mu_{\pi_k} > 0$ since $-\alpha_i\le -\alpha_{ref}$. This ensures the desired timely convergence while switching.
\end{proof}
\noindent Note that the SCAP, as defined in Eq.~\ref{eq:scap}, {\em greedily} chooses the next BSC to impose maximum decay at every iteration, which {\em may not} ensure the fastest recovery. But as shown in Lemma~\ref{lemSwitch}, it ensures both safety and recovery within the desired deadline. 
In the next section, we present the algorithmic framework for SCAP to constrain switching to maintain resource awareness.
\subsection{Step 4: Incorporating Resource-awareness in Algorithmic Framework for SCAP}
\label{sec:Algo}
As discussed in the last section, the state-dependent controller activation policy (SCAP) decides which controller to switch to based on the current state. 
The decision is implied by the distance of the current trajectory from the boundaries of $\R{P}$ or $\R{S}$. 
Essentially, SBFs help us measure this distance; the more negative its values, the safer the states. However, in our case, the SBFs are specific to BSCs and denote the distance of the current trajectory from the largest SFIR boundary of the currently active BSC, rather than $\R{S}$.
Therefore, judging safety based on their values throughout the entire SOR is not a suitable approach. To this end, we design a {\em global logarithmic barrier function (GLBF)} to measure this notion of safety in runtime. It is defined as,
   $B_g(x)=\sum\limits_{i\le n} -log(a_S^{(i)} x_i - b_S^{(i)})$
%
where $a_S^{(i)}, b_S^{(i)}$ are as defined in Eq.~\ref{eq:LMIsafe2}. 
Note that it satisfies the three required properties for SBFs as given in Eq.~\ref{eq:BF}. Hence, the lower the GLBF's value for an input state, the safer the state is. We can reduce resource usage by switching to an SBF with a lower rate, as the system recovers to a safer state.
This resource-aware decision is invoked in the SCAP before the BSCs exceed the budgeted bandwidth amount for the current control task. 
This is demonstrated in Fig.~\ref{fig:scap}. At $time = 0$, the control tasks (marked with black horizontally-stiped boxes) corresponding to BSC $K^S_2$ are executed once in every $h_2$ sampling period. As the utilisation budget exceeds and the trajectory progresses, task instances corresponding to another BSC $K^S_1$ with a higher sampling period $h_1 (>h_2)$ start executing. The switch is marked with a blue star in Fig.~\ref{fig:scap}. This is decided based on the decay measured using $B_g(.)$. 
Therefore, in addition to the earlier condition of activating BSCs with the highest decay, as presented in Eq.~\ref{eq:scap}, we also introduce a utilisation minimising criterion based on the GLBF decay values. This switching naturally frees processing bandwidth due to fewer control task executions. As the trajectory enters POR $\R{P}$ 
within the recovery deadline of $0.5s$, controller tasks corresponding to POC start executing (marked with a black star). We present the algorithmic framework that implements SCAP in Algo.~\ref{algoscap}. 
\par\noindent $\blacksquare$ {\bf Algorithmic Framework:} Algo.~\ref{algoscap} takes the following inputs:  {\bf (1)} The set of BSCs $\mb{K}^S$, {\bf (2)} set of its corresponding QLFs $\mb{V}$, {\bf (3)} set of their respective SBFs $\mb{B}$, {\bf (4)} set of periodicities $\mb{h}$, {\bf (5)} set of respective QLF decays $\alpha vals$, {\bf (6)} Recovery deadline $\delta t$, {\bf (7)} The POC, {\bf (8)} a potentially time-varying utilisation budget for the current closed loop $U_b$, {\bf (9)} the GLBF $B_g(.)$ for the system, {\bf (10)} the POR $\R{P}\subset \R{S}$. The algorithm activates a suitable controller (POC or BSCs) at every $k$-th sampling iteration. We start by initialising an array $lb$ of length $\delta t / \max\limits_{h_i\in\mb{h}}(\mb{h})$, with 0 at every index (see line~\ref{algscapInit}).  
\begin{algorithm}[!b]
\footnotesize
\begin{algorithmic}[1]
\Require{The sets of: BSCs $\mb{K}^S$, their corresponding QLFs $\mb{V}$, SBFs $\mb{B}$, periodicities $\mb{h}$, QLF decays $\mb{\alpha vals}$, and Recovery deadline $\delta t$, The POC $K$, utilisation budget for the current control loop $U_b$, the GLBF $B_g(.)$, POR $\R{P}$}
\Ensure{Activate Suitable Controller (BSCs/POC) at every Sampling Instance}
\State $lb\gets \delta t / \max\limits_{h_i\in\mb{h}}(\mb{h})$ length empty array, $\Delta_{lb}\gets 0$, $d_{z}\gets 2,\ decay \gets 0,\ idx \gets 0$
\label{algscapInit}
\State $\cl{K}^S\gets \{\tupleangle{h_i, K^S_i, V_i, \alpha_i, B_i}\forall h_i\in \mb{h}\}$ from respective sets, sorted in increasing order of $h_i$\label{algscapInit2}
    \For{each $k$-th sampling iteration}\label{algscapForIterStart}
         \State $lb\gets$ shift each  element to left and add, $B_g(x[k])$ at end, $\Delta_{lb}\gets \min\limits_{i< \abs{lb}}\frac{lb[i]-lb[i-1]}{(\abs{lb}-1)}$\label{algscapGlbfComp}
         \State $util \gets$ Compute processor utilisation for current BSC task, $d_z\gets d_z-1$ \label{algscapFindUtil} 
         \If{$ x[k]\in\R{P}$ \& $d_{z}\le 0$ }\label{algscapIfInPor}
             Activate POC \Comment{when inside POR}
        \Else\If{$d_{z}=0$ or $util> U_b$}\label{algscapIfNotInPor}\Comment{when outside POR use SCAP from Eq.~\ref{eq:scap}}
             \For{each tuple in $\cl{K}^S$}\label{algscapForeachSbc}\Comment{search BSC with highest decay, safe in current region}
                \If {$B_i(x[k])\leq 0$ \& $util\le U_b$}\label{algscapIfsafeSbcLessutil}\Comment{search among BSCs for current region}
                    \If{$decay < \alpha_iV_i(x)$}\label{algscapIfmostLFdecay}
                    \State $decay \gets \alpha_iV_i(x),\ idx \gets i, d_{z}=2$\label{algscapIfmostLFdecaythen}
                    \EndIf\label{algscapIfmostLFdecayEnd}
                \Else\If {$B_i(x[k])\leq 0$ \& $util> U_b$}\label{algscapIfsafeSbcMoresutil}\Comment{if utilisation beyond budget}
                    \If{($\Delta_{lb}<0$ $\Rightarrow h_i > h_{idx}$) \& $(decay < \alpha_iV_i(x))$} \label{algscapIfmostLFdecayHighper} 
                        \State $decay \gets \alpha_iV_i(x),\ idx \gets i, d_{z}=2$\label{algscapAssignMostLFdecay}\Comment{select BSCs with lower rates}
                    \EndIf\label{algscapIfmostLFdecayHighperEnd}
                    \If{$\Delta_{lb}\ge 0$} Notify to Reduce Utilisation of less critical programs\label{algscapIfPositiveDecay}\EndIf
                \EndIf\EndIf\label{algscapIfsafeSbcMoresutilEnd}
             \State {\bf end for \bf} \EndFor \label{algscapForeachSbcEnd}
             \State activate $K^S_{idx}\in \mb{K}^S$\label{algscapActvtSbc}
         \EndIf\EndIf\label{algscapIfNotInPorEnd}
    \EndFor 
    \label{algscapForIterEnd}
\end{algorithmic}
\caption{Resource-Aware Safe Controller Activation in Runtime}
\label{algoscap}
\end{algorithm}
The length is determined to capture the value of GLBF for the minimum number of sampling iterations required for any BSC to recover within the deadline $\delta t$, which is the number of iterations required for the BSC with the longest sampling period. We initialise $\Delta_{lb}$ to store the average decay of GLBF during recovery. A variable $d_z$ is defined, which holds the minimum number of sampling iterations that an BSC should remain activated. This is used to avoid the {\em Zeno phenomena} that can cause an infinite number of switches between BSCs within a finite time. The minimum number of sampling iterations $d_z$ is initialised with a constant value of 2, but it can also be parameterised as an input. The variables $decay$ and $idx$ are initialised to a value of 0. The $decay$ variable stores the QLF decays imposed by different BSCs at the current state, while the $idx$ variable keeps track of the index of the BSC with the highest decay from the list of BSCs $\mb{K}^S$. In line~\ref{algscapInit2}, we create a tuple for each BSC gain from the set $\mb{K}_S$. Each tuple contains the following information: the BSC gain, its periodicity from $\mb{h}$, its QLF from $\mb{V}$, its QLF decay from $\mb{\alpha vals}$, and its SBF from $\mb{B}$. We then store these tuples in a list $\cl{K}^S$, arranging them in increasing order of their periodicities $h_i \in \mb{h}$.
\par For each sampling iteration (from lines~\ref{algscapForIterStart}-\ref{algscapForIterEnd}), first we compute the GLBF value for the current system state and append it to the fixed-length array $lb$, by removing the first element and shifting each of the elements to their left (see line~\ref{algscapGlbfComp}). This is done to compute the average decay over the $\delta t$ duration, i.e., by dividing the difference between consecutive elements of $lb$ and dividing it by $\abs{lb}-1$, where $\abs{lb}$ signifies the length of the fixed-length $lb$ array. In the next line, the processor utilisation of the current task is computed and stored in $util$ (line~\ref{algscapFindUtil}). In the same line, the value of the minimum required number of sampling iterations $d_z$ is also decremented by 1 as we run any controller for one sampling iteration. If the current state is in $\R{P}$ and $d_z$ has a value less than or equal to 0, indicating that we can switch to a different controller, we activate the POC. When the current state is outside $\R{P}$, i.e., in $\Runsf$, we need to activate the BSCs (lines~\ref{algscapIfNotInPor}-\ref{algscapIfNotInPorEnd}). This is done by checking each BSC tuple from $\cl{K}^S$ following the SCAP definition as derived in Eq.~\ref{eq:scap} (see lines~\ref{algscapForeachSbc}-\ref{algscapForeachSbcEnd}). 
\begin{figure*}[!ht]
    \centering
    \begin{subfigure}[b]{0.49\linewidth}
        \includegraphics[clip,width=\linewidth]{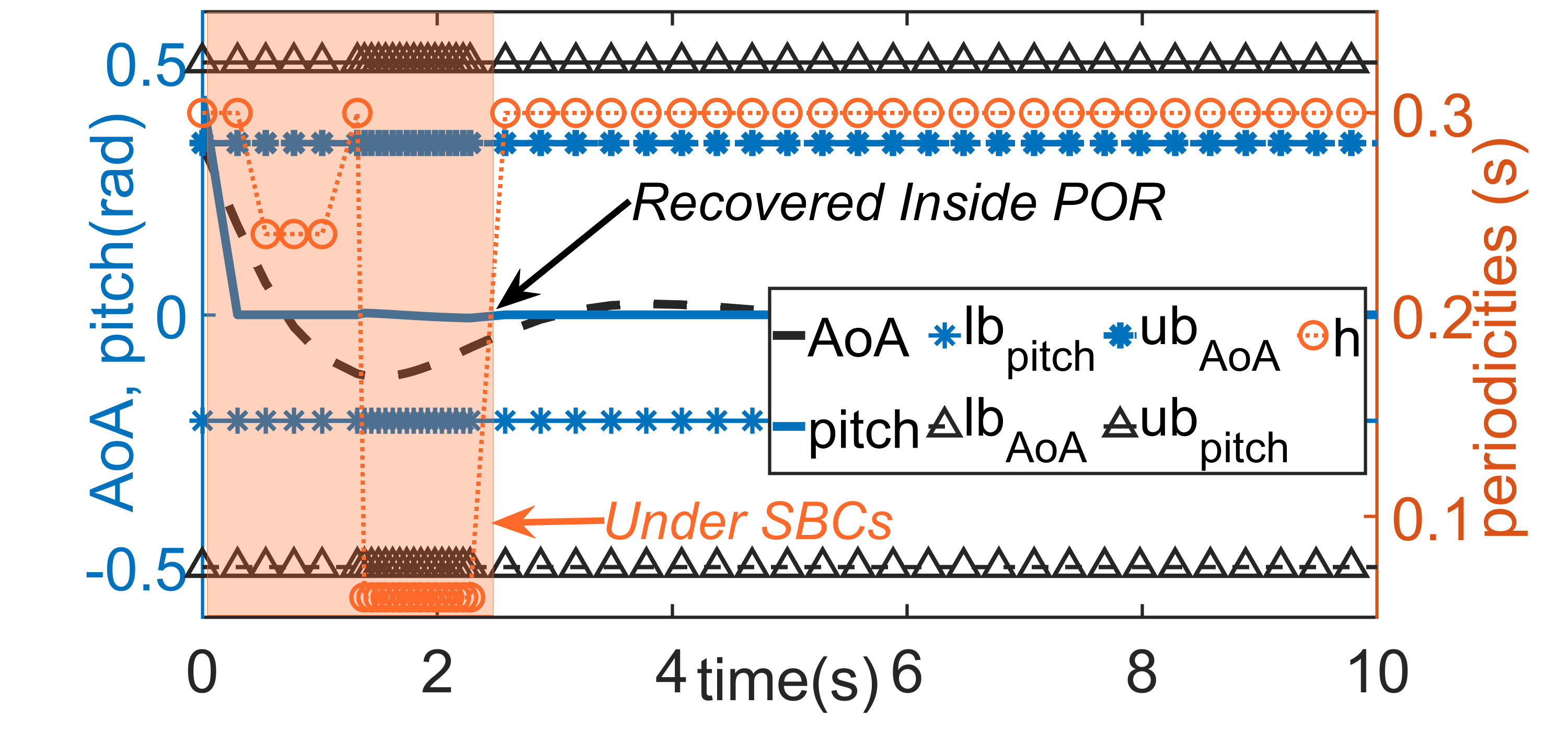}
    \label{fig:alcsbc2}
    \end{subfigure}
    \hfill
    \begin{subfigure}[b]{0.49\linewidth}
        \includegraphics[clip,width=\linewidth]{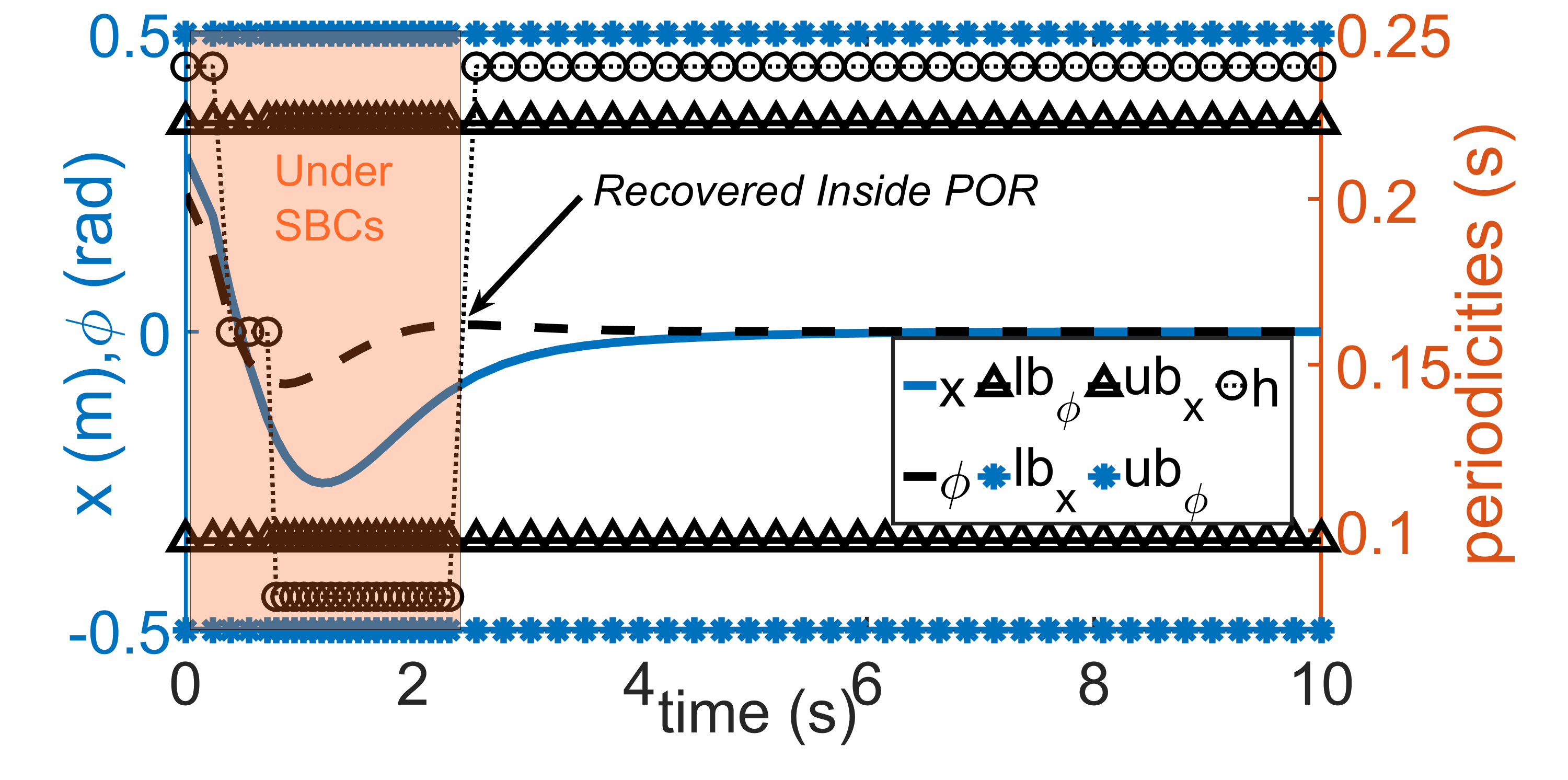}
    \label{fig:ipsbc2}
    \end{subfigure}
    \caption{ Adaptive Scheduling of BSCs:$\quad$ (a) Aircraft Longitudinal Dynamics (on left),$\quad$ (b) Inverted Pendulum Dynamics (on right)}
    \label{fig:recoveryadaptive}
\end{figure*}
\par Now, there are two possibilities: {\em 1.} when the processor utilisation for this feedback control loop $util$ is within the budget $U_b$, which is implemented in lines~\ref{algscapIfsafeSbcLessutil}-\ref{algscapIfmostLFdecayEnd} and {\em 2.} when $util$ is beyond the budget $U_b$, which is implemented in~\ref{algscapIfsafeSbcMoresutil}-\ref{algscapIfsafeSbcMoresutilEnd}. For {\em case 1.}, if the current state is in the SFIR of an BSC, we measure decay that will be imposed by it using the variable $decay$ at the current state. The index of the BSC with the highest decay from $\cl{K}^S$ is greedily stored in the variable $idx$ (see line~\ref{algscapIfmostLFdecaythen}). For {\em case 2.}, along with checking whether the current region is in the SFIR of an BSC, we also check whether the average decay in the last $\delta t$ duration as stored in $\Delta_{lb}$ is negative. If $\Delta_{lb}$ is negative (meaning safe), then we only consider controllers with higher sampling periods or lower execution rates compared to the currently activated BSC and compare the decays imposed by them at the current sampling iteration (see line~\ref{algscapIfmostLFdecayHighper}). The index of the BSC with the highest decay and higher periodicity is then stored in the variable $idx$ (see line~\ref{algscapAssignMostLFdecay}). The if condition in line~\ref{algscapIfmostLFdecayHighper} also takes care of the case when $\Delta_{lb}$ is positive. In such cases, we select the BSC with the highest decay, regardless of its periodicity, and notify the scheduler to decrease the utilisation budget of other less critical tasks (see line~\ref{algscapIfPositiveDecay}). When selecting a new controller, we always set the variable $d_z$ to 2 to prevent the controller from switching back and forth frequently. This value can be modified based on the system's behaviour. 
Finally, the selected BSC is activated for the following iteration (line~\ref{algscapActvtSbc} in Algo.~\ref{algoscap}).
\par\noindent $\blacksquare$ \textbf{Correctness, Completeness and Complexity Analysis:} Algo~\ref{algoscap}  implements SCAP to switch between BSCs (and POC) in both cases, i.e., while the utilisation is within $U_b$ and while it is beyond budget. The correctness of the algorithm is ascertained by the fact that  Lemma~\ref{lemSwitch} provides the stability and safety guarantee of SCAP. 
It may be observed that if a schedulable and valid controller switching solution exists for a given problem instance, the algorithm will report it. 
In the absence of such solutions, the algorithm stops searching and suggests a change in design parameters (line~\ref{algscapIfPositiveDecay}). 
%
\par The algorithm does the following in every sampling iteration: {\bf (i)} evaluates the average decay by storing GLBF for a sliding window of $\scriptstyle{\delta t/\max\limits_{h_i}h_i}$ sampling iterations, {\bf (ii)} evaluates whether the system state in the current iteration is inside $\R{P}$, {\bf (iii)} evaluates and compares the decay that will be imposed by BSC with different sampling periods. For part {\bf (i)}, we can compute it in $O(\scriptstyle{\frac{\delta t}{\max\limits_{h_i\in \mb{h}}h_i}})$. For part {\bf (ii)}, if there exists a closed-form polytopic expression of $\R{P}$ with $q>0$ inequalities, the worst-case complexity of the point inclusion check is polynomial in the number of inequalities, i.e., $O(q)$. Lastly, for part {\bf (iii)}, in the worst case, the algorithm evaluates the decay values for each BSC in every sampling iteration and compares them. The complexity therefore becomes $O(\abs{\mb{K}^S})$, where $\abs{\mb{K}^S}$ denotes its length of $\mb{K}^S$. Therefore, the overall worst-case complexity is $O(\scriptstyle{\frac{\delta t}{\max\limits_{h_i\in \mb{h}}}h_i}+\abs{\mb{K}^S}+q)$. 
In the following section, we evaluate the synthesised BSCs and Algo.~\ref{algoscap} by simulating them for two case studies.
\begin{figure*}[!ht]
    \centering
        \begin{subfigure}[b]{0.48\linewidth}
        \includegraphics[clip,width=\linewidth]{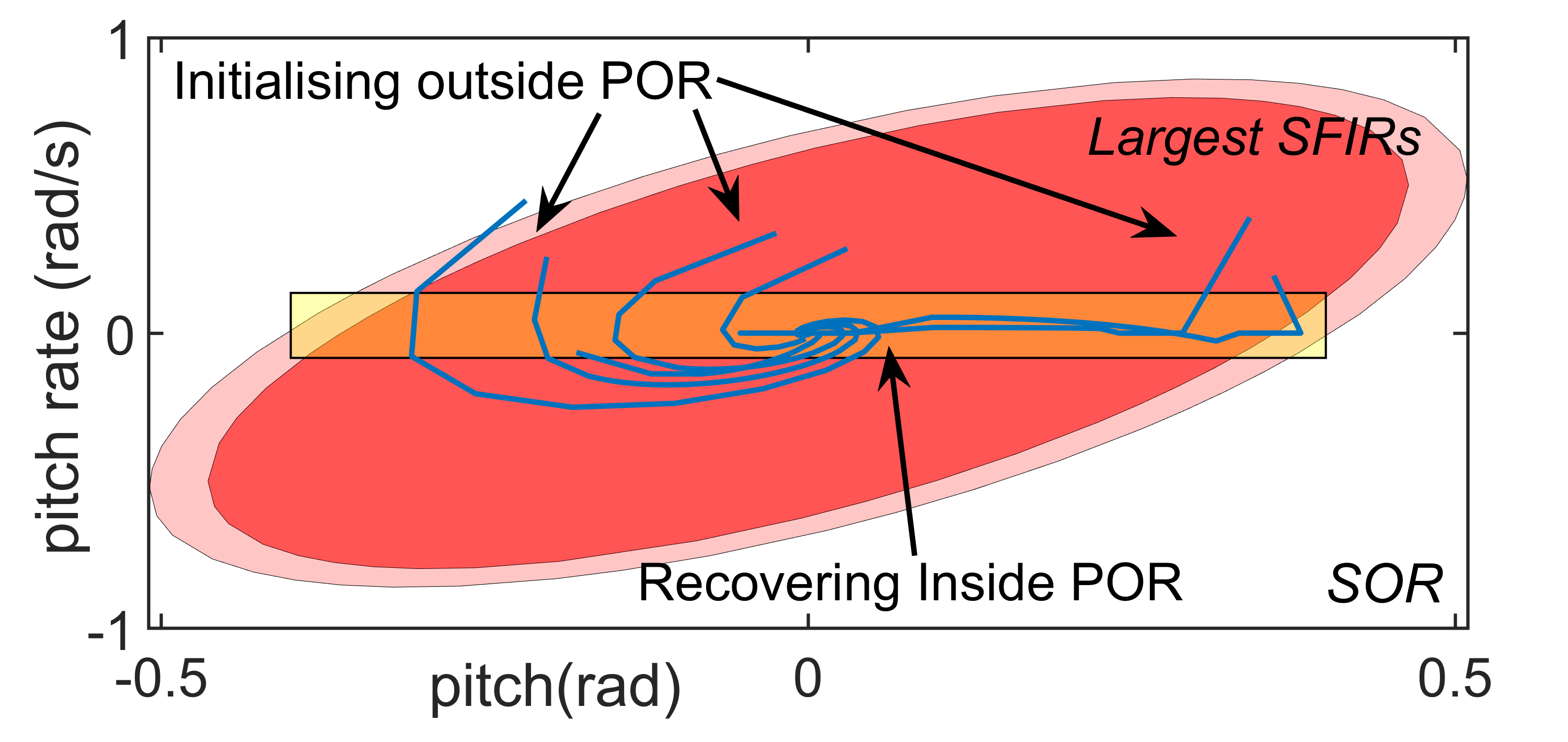}
        \label{fig:alcsbc1}
    \end{subfigure}
    \hfill
    \begin{subfigure}[b]{0.48\linewidth}
        \includegraphics[clip,width=\linewidth]{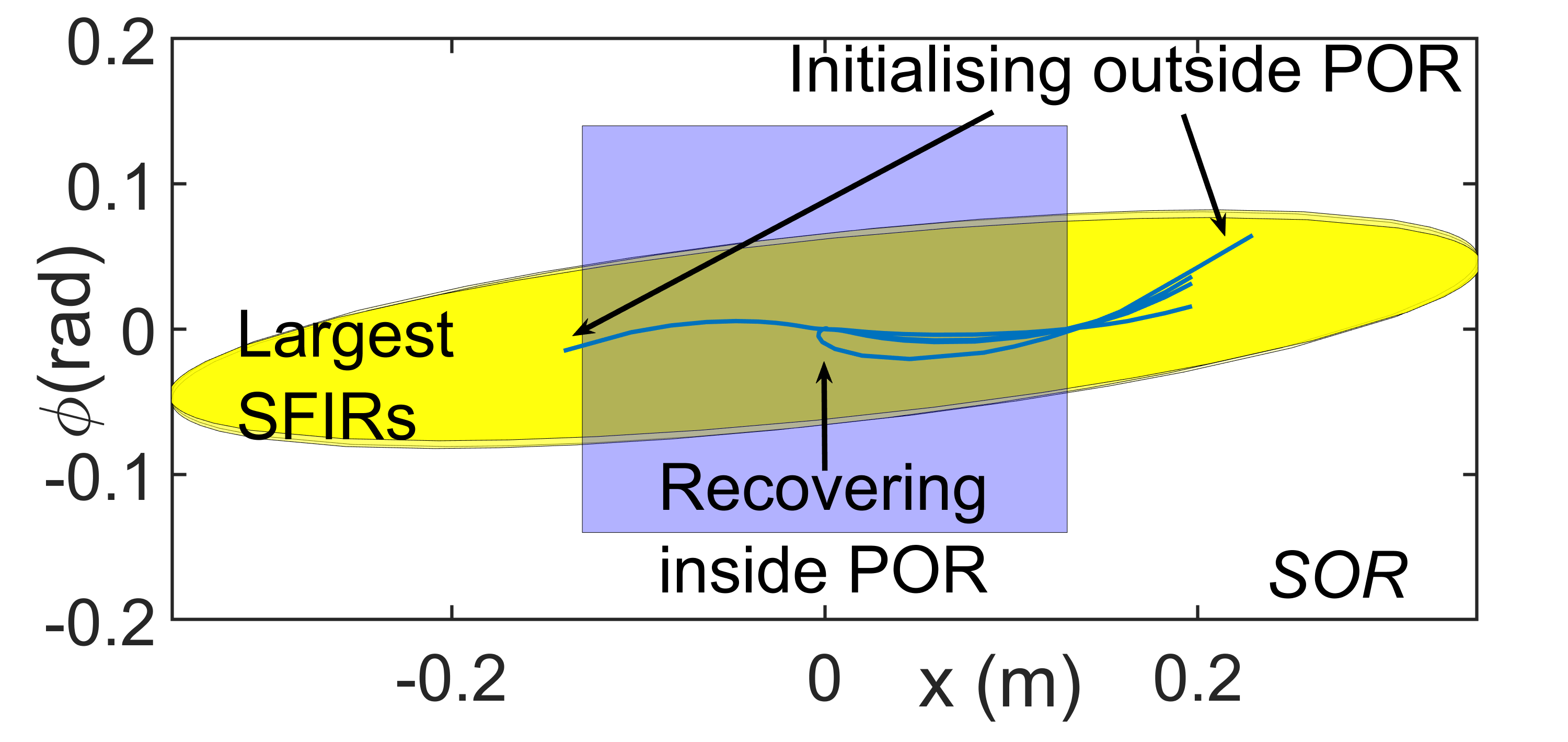}
        \label{fig:ipsbc1}
    \end{subfigure}
    \caption{State Recovery Under BSCs:$\quad$ (a) Aircraft Longitudinal Dynamics (on left),$\quad$ (b) Inverted Pendulum Dynamics (on right)}
    \label{fig:recoverysatates}
\end{figure*}
\vspace{-5mm}
\section{Experiments}
\label{secExp}
We synthesise and implement BSCs and their activation framework for aircraft longitudinal dynamics (ALD)~\cite{lavretsky2024robust} and inverted pendulum dynamics (IPD) ~\cite{ctms}, linearised around specific equilibrium points. The BSC synthesis Prob.~\ref{prob:synth} is modelled using YALMIP~\cite{Lofberg2004} and is solved using Mosek in MATLAB. For evaluation, Algorithm~\ref{algoscap} is subsequently simulated using MATLAB with BSCs synthesised for both systems. 
\par\noindent $\blacksquare$ {\bf Aircraft Longitudinal Dynamics Control:} The control system for ALD aims to achieve a desired pitch angle (in rad) by actuating proper elevation (in m) input. The system is linearised around a pitch angle of $0.5$ rad. We derive the controllers for this system with recovery deadlines of 1 second and 2.5 seconds, given a base periodicity of $20$s (see Sec.~\ref{sec:switchctrl}). The chosen periodicities to synthesise BSCs are the multiples of the base periodicity and inside the range of $[20- 300]$ms. The safety boundaries for the four measurable states of ALD are: $[-100, 100]$ kmph for airspeed, $[-0.12,0.36]$ rad for angle of attack (AoA), $[-1,1]$ rad/s for pitch rate, and $[-0.5,0.5]$ rad for pitch angle. Note that our synthesis framework was able to synthesise controllers having $\{20, 40, 60, 80, 120\}$ ms sampling periods that safely meet the required recovery deadlines of $1$s and $\{20, 40, 60, 80, 120, 160, 200, 240, 300\}$ms sampling periods that safely meet the required recovery deadlines of $2.5$s. As discussed, these controllers have their respective SFIRs in the state space and need to be applied by the switching policy (Algorithm~\ref{algoscap}) based on the utilisation bound as well as the current state. 
%
\par In the left-side figure in Fig.~\ref{fig:recoveryadaptive}, we plot the evolution of AoA (dashed black line) and pitch angle deviation from the desired equilibrium $0.5$ rad (blue line) under the operation of BSCs (in the left y-axis) for a recovery deadline of $2.5$s w.r.t. time (in s). The changing periodicities of BSCs (brown plots with circle markers) are plotted against time (x-axis) on the right y-axis. The upper and lower safety boundaries for AoA are plotted in blue lines with star-shaped markers, and the safety bounds for pitch angle are plotted in black with triangular markers. The periodicities $h_i$ are plotted in brown, with circle markers. The pitch angle POR and AoA POR are $[-0.35,0.35]$ rad and $[-0.084,1.36]$ rad, respectively. The SCAP implementation activates BSCs when the system is outside this POR (the light brown area in Fig.~\ref{fig:recoveryadaptive}).
%
\par  Potentially unsafe trajectories are simulated by initialising system trajectories from $\Runsf$, i.e., inside SOR but not in POR. In this simulation, the time-varying utilisation budget $U_b$ considered allows for a minimum periodicity of 60 ms in $t\in [0,1.32]$s, 300 ms in $t\in [1.33,1.87]$s and 60 ms in $t\in [1.88,2.5]$s. With the system $\notin$ POR at the start of the simulation, the SCAP implementation greedily activates the BSC with 240 ms, as it imposes the highest decay rate among the BSCs having current state within their SFIRs. 
This does not restore the system states fully inside POR.
To satisfy the input utilisation budget at the next time instant, SCAP activates an BSC with a higher periodicity of 300 ms in the following iteration since the system states are relatively {\em safer} than earlier. 
At this point, to utilise the higher available bandwidth in the next sampling iteration, SCAP again switches to 60 ms BSC (with higher frequency) and uses it for the next few iterations. As the trajectory enters POR within $2.5$s as designed, the SCAP reactivates the POC with $300$ms sampling period. 
\par To demonstrate the overall efficacy of the proposed synthesis framework, in the left-side figure in Fig.~\ref{fig:recoverysatates}, we plot the recovery of multiple potentially unsafe trajectories in blue. Fig.~\ref{fig:recoverysatates} plots pitch angle deviation on the x-axis and pitch rate (for ease of visualisation, as the AoA SOR is relatively narrow) on the y-axis. As the trajectories recover by activating different BSCs, their corresponding largest ellipsoidal SFIRs are also plotted in red. The yellow box denotes POR. Notice that every trajectory starting outside the POR is able to recover inside the POR by activating proper BSCs. 
\par\noindent $\blacksquare$ {\bf Inverted Pendulum Control:} The controller for IPD applies force to maintain its position at $x=0$. The dynamics are linearised around the equilibrium point of phase angle $\phi=\pi$ rad. The BSCs for this system are derived with recovery deadlines of $2s$ and $2.5s$. 
The safety boundaries for the four measurable states of IPD are: $[-0.5,0.5]$ m for position $x$, $[-0.5,0.5]$ m/s for speed, $[-0.35,0.35]$ rad for phase angle $\phi$, and $[-0.35,0.35]$ rad/s for angular velocity. Given a base periodicity of $20$ ms and a range of $[20,300]$ms, we solve Prob.~\ref{prob:synth} to synthesise multiple BSCs. Our synthesis framework was able to synthesise BSCs with \{20, 40, 60\} ms sampling periods that safely meet the required recovery deadline of $2$s and BSCs with $\{20, 40, 60, 100, 120, 160, 200, 240\}$ ms sampling periods that meet the deadline of $2.5$s. For the recovery deadline of $1s$ synthesised controllers, they are unusable due to very small SFIRs. 
\par In the left y-axis of the right-side figure in Fig.~\ref{fig:recoveryadaptive}, the position $x$ (in solid blue) and phase angle $\phi$ (in dashed black) under the operation of synthesised BSCs are plotted, whereas the periodicities of active BSCs (black plot with circle markers) are plotted on the right y-axis. In this simulation, $U_b$ is assumed constant and allows a minimum period of $40ms$. Notice that to handle the deviation beyond POR ($[-0.14,0.14]$ for both $x$, $\phi$), first an BSC with periodicity 160 ms and then an BSC with periodicity 80 ms are activated. Note that, although the BSC with 80 ms periodicity imposes a higher decay compared to the 160 ms BSC, initially SCAP could not activate the BSC with 80 ms periodicity, since the system states were not in its SFIR (but in the SFIR of the 160 ms BSC). This ensures safe state recovery within a 2.5-second mark, as guaranteed by the synthesis method. The figure in the right half of Fig.~\ref{fig:recoverysatates} plots multiple trajectories of IPD states under the operation of synthesised BSCs. We plot $x$ on the x-axis and $\phi$ on the y-axis. The overlapping yellow ellipses show the largest SFIRs of the active BSCs at different sampling periods. The purple rectangle denotes the POR of IPD. As can be seen, all trajectories that lie outside the POR are (timely) recovered inside the POR without compromising safety.
\par \noindent $\blacksquare$  \textbf{Limitations and Advantages:} 
The state-of-the-art works~\cite{dai2024verification,garg2021robust,sudvarg2025integrated} primarily focus on SOS-based compatible Lyapunov and barrier synthesis methods or formulate quadratic program (QP)-based safe controller synthesis methods. Some~\cite{garg2021robust}  enforce timely recovery with predefined LFs and BFs, promising similar results to ours. These methods are also applicable to nonlinear dynamics, unlike the proposed solution, which is applicable only to linearised hybrid dynamics, since a candidate LF for such systems is relatively easy to obtain.
%
\par Our method relies on SDP-based synthesis. Therefore, compared to the state-of-the-art (QP, SOS), it requires less time to solve (takes $\sim 0.43$s in an Intel Core i5 processor with 8 GB of RAM, QP takes $\sim 0.85$s). Additionally, the proposed synthesis framework repurposes QLFs as SBFs for various sampling rates, thereby increasing the feasibility of a larger number of schedulable safe control solutions that individually and together cover a broader region within the safe state space, ensuring timely recovery from those regions.

\section{Concluding Remarks}
\label{secConcl}
We present a novel {\em backup safe controller} synthesis technique for simplex architectures that focuses on safe and timely recovery of linearised time-invariant hybrid system trajectories utilising quadratic Lyapunov function-based safety barriers. To obtain more feasible and schedulable options in resource-limited platforms, we solve the synthesis problem for different control data computation rates and devise a {\em state-dependent controller activation policy} to enable a greedy yet safe switching between the synthesised controllers in the presence of a strict utilisation budget. In the future, we intend to devise a similar safe control solution using higher-order Lyapunov and barrier functions for nonlinear hybrid dynamics and validate it through real-world demonstrations.

\newpage
\bibliographystyle{ACM-Reference-Format}
\bibliography{references}

\end{document}